\newif\ifCONF
\newif\ifSUBM
\newif\ifDRAFT
\newcommand{\cE}{\mathcal{E}}
\DeclareMathOperator{\nnz}{\mathtt{nnz}}
\DeclareMathOperator{\E}{\mathbf{E}}
\DeclareMathOperator{\Var}{\mathbf{Var}}
\DeclareMathOperator{\argmin}{argmin}
\DeclareMathOperator{\prob}{\mathbf{P}}
\DeclareMathOperator{\rank}{\mathrm{rank}}
\DeclareMathOperator{\dist}{\mathrm{dist}}
\DeclareMathOperator{\Vol}{\mathrm{Vol}}
\newcommand\cL{{\cal L}}
\newcommand\cN{{\cal N}}
\newcommand\cV{{\cal V}}
\newcommand\cT{{\cal T}}
\newcommand\cS{{\cal S}}
\newcommand\cB{{\cal B}}
\newcommand\cC{{\cal C}}
\newcommand\cM{{\cal M}}
\newcommand\tX{{\tilde X}}
\newcommand\hX{{\hat X}}
\DeclareMathOperator\rspace{\mathrm{rspace}}
\DeclareMathOperator\cspace{\mathrm{cspace}}
\DeclareMathOperator\bone{\mathbf{1}}
\newcommand{\vertiii}[1]{{\left\vert\kern-0.25ex\left\vert\kern-0.25ex\left\vert #1 
    \right\vert\kern-0.25ex\right\vert\kern-0.25ex\right\vert}}
\newcommand{\myvertiii}[1]{{\vert\kern-0.25ex\vert\kern-0.25ex\vert #1 
    \vert\kern-0.25ex\vert\kern-0.25ex\vert}}
\newcommand{\norm}[1]{{\| #1 \|}}
\newcommand{\norme}[1]{{\myvertiii{#1}}}
\newcommand{\marrow}{\marginpar[\hfill$\longrightarrow$]{$\longleftarrow$}}
\newcommand{\niceremark}[3]
   {\textcolor{red}{\textsc{#1 #2:} \marrow\textsf{#3}}}
\newcommand{\Ken}[2][says]{\niceremark{Ken}{#1}{#2}}
\newcommand{\David}[2][says]{\niceremark{David}{#1}{#2}}
\newcommand{\Ken}[1]{}
\newcommand{\David}[1]{}
\newcommand{\poly}{{\mathrm{poly}}}
\newcommand{\eps}{\varepsilon}
\newcommand{\R}{{\mathbb R}}
\newcommand{\polylog}{{\mathrm{polylog}}}
\newcommand{\Ibr}[1]{\llbracket#1\rrbracket}
\newtheorem{theorem}{Theorem}
\newtheorem{definition}[theorem]{Definition}
\newtheorem{lemma}[theorem]{Lemma}
\newtheorem{fact}[theorem]{Fact}
\newtheorem{remk}[theorem]{Remark}
\newtheorem{exmp}[theorem]{Example}
\newenvironment{remark}{\begin{remk}
\begin{normalfont}}{\end{normalfont}
\end{remk}}
\def\FullBox{\hbox{\vrule width 8pt height 8pt depth 0pt}}
\def\qed{\ifmmode\qquad\FullBox\else{\unskip\nobreak\hfil
\penalty50\hskip1em\null\nobreak\hfil\FullBox
\parfillskip=0pt\finalhyphendemerits=0\endgraf}\fi}
\def\qedsketch{\ifmmode\Box\else{\unskip\nobreak\hfil
\penalty50\hskip1em\null\nobreak\hfil$\Box$
\parfillskip=0pt\finalhyphendemerits=0\endgraf}\fi}
\newenvironment{proof}{\begin{trivlist} \item {\bf Proof:~~}}
  {\qed\end{trivlist}}
\date{}
\begin{document}


\title{Input Sparsity and Hardness for Robust Subspace Approximation}
\ifCONF
\author{
\IEEEauthorblockN{Kenneth L. Clarkson}
\IEEEauthorblockA{IBM Research -- Almaden\\
San Jose, USA\\
klclarks@us.ibm.com} 
\and
\IEEEauthorblockN{David P. Woodruff}
\IEEEauthorblockA{IBM Research -- Almaden\\
San Jose, USA\\
dpwoodru@us.ibm.com}
}
\else 
\author{Kenneth L. Clarkson\\IBM Research, Almaden\\{\tt klclarks@us.ibm.com} \and David P. Woodruff\\IBM Research, Almaden\\{\tt dpwoodru@us.ibm.com}}
\fi
\maketitle
\begin{abstract}
In the subspace approximation problem, we seek a $k$-dimensional subspace $F$ of $\mathbb{R}^d$ that minimizes
the sum of $p$-th powers of Euclidean distances to a given set of $n$ points $a_1, \ldots, a_n \in \mathbb{R}^d$, 
for $p \geq 1$. 
More generally than minimizing $\sum_i \dist(a_i,F)^p$,
we may wish to minimize $\sum_i M(\dist(a_i,F))$ for some loss function $M()$, for example, $M$-Estimators, which
include the Huber and Tukey loss functions. 
Such subspaces provide alternatives
to the singular value decomposition (SVD), which is the $p=2$ case, finding
such an $F$ that minimizes the sum of squares of distances. For $p \in [1,2)$,
and for typical $M$-Estimators, the minimizing $F$ gives a solution that is more robust to
outliers than that provided by the SVD.
We give several algorithmic results for these robust subspace approximation problems.

We state our results as follows, thinking of the $n$ points as forming an $n \times d$ matrix $A$, and 
letting $\nnz(A)$
denote the number of non-zero entries of $A$. 
Our results hold for $p\in [1,2)$.
We use $\poly(n)$ to denote $n^{O(1)}$ as $n\rightarrow\infty$.
\begin{enumerate}
\item For minimizing $\sum_i \dist(a_i,F)^p$, 
we give an algorithm running in
\[
O(\nnz(A) + (n+d)\poly(k/\eps) + \exp(\poly(k/\eps)))
\]
time which outputs a $k$-dimensional 
subspace $F$ whose cost is at most a $(1+\eps)$-factor larger than the optimum.
\item We show that the problem of minimizing $\sum_i \dist(a_i, F)^p$ 
is NP-hard, even to output a $(1+1/\poly(d))$-approximation. This extends work of Deshpande et al. (SODA, 2011) which could only
show NP-hardness or UGC-hardness for $p > 2$; their proofs critically rely on $p > 2$. Our work resolves an open question
of [Kannan Vempala, NOW, 2009]. Thus, there cannot be an algorithm running in time polynomial in $k$ and $1/\eps$ unless P = NP.
Together with prior work, this implies that the problem is NP-hard for all $p \neq 2$. 
\item For loss functions for a wide class of $M$-Estimators, we give a problem-size reduction:
for a parameter $K=(\log n)^{O(\log k)}$, our reduction takes
\[
O(\nnz(A)\log n + (n+d)\poly(K/\eps))
\]
time to reduce the problem to a constrained version involving matrices whose dimensions
are $\poly(K\eps^{-1}\log n)$. We also give bicriteria solutions.
\item Our techniques lead to the first $O(\nnz(A) + \poly(d/\eps))$ time algorithms for $(1+\eps)$-approximate
regression for a wide class of convex $M$-Estimators. 
This improves prior results \cite{cw15}, which were $(1+\eps)$-approximation for Huber regression only,
and $O(1)$-approximation for a general class of $M$-Estimators.
\end{enumerate}
\end{abstract}
\ifSUBM
\thispagestyle{empty}
\newpage
\setcounter{page}{1}
\fi

\ifCONF
\begin{IEEEkeywords}
low rank approximation, numerical linear algebra, regression, robust statistics, sampling, sketching 
\end{IEEEkeywords}
\fi

\section{Introduction}
In the problem of {\sf Subspace Approximation}, we are given $n$ points $a_1, \ldots, a_n \in \R^d$ and
we want to find a $d\times d$ projection matrix $X$, projecting row vector $a_i$ to $a_iX\in F$, where 
$F$ is a $k$-dimensional subspace,
such that $X$ minimizes $\sum_{i=1}^n M(\|a_i-a_iX\|_2)$, for a given function $M$. The problem fits
in the growing body of work on finding low-dimensional representations of massive data sets,
with applications to clustering, data mining, machine learning, and statistics. 

When $M(x) = x^2$, the problem is principal component analysis (PCA), and the optimal subspace is 
spanned
by the top $k$ right singular vectors of the $n \times d$ matrix $A$ whose rows are the points
$a_1, \ldots, a_n$.
The optimal solution can be computed using the singular value decomposition (SVD) in $\min(nd^2, n^2d)$ time.
By relaxing this to finding a $k$-dimensional subspace with cost at most $(1+\eps)$ times the
optimum, the problem can be solved in $nd \cdot \poly(k/\eps)$ time deterministically \cite{l13,gp14}, where
$\poly(k/\eps)$ denotes a low degree polynomial in $k/\eps$. If a small probability of error is
allowed, the running time can be improved to $O(\nnz(A)) + (n+d)\cdot \poly(k/\eps)$, where $O(\nnz(A))$
denotes the number of non-zero entries of the matrix $A$ \cite{s06,CW13,mm13,nn13,bn13}.
The latter time is useful for sparse matrices, and is optimal in the sense that any algorithm achieving some relative
error with constant probability needs to read $\Omega(\nnz(A))$ entries of $A$. 

The case $M(x) = |x|^p$, $p \geq 1$, was introduced in the theory community by 
Shyamalkumar and Varadarajan \cite{sv07}, and earlier the case $p = 1$ as well as some $M$-Estimators
were studied in the machine learning community by Ding \emph{et al.} \cite{ding2006r}. 
These works 
include the important case of $p = 1$, which provides a more robust solution than the SVD in the
sense that the optimum is less sensitive to outliers. 
Shyamalkumar and
Varadarajan \cite{sv07} give an algorithm for any $p \geq 1$ that runs in time
$nd \cdot\exp((k/\eps)^{O(p)})$, where $\exp(n)$ denotes a function in $2^{\Theta(n)}$. 

Deshpande and
Varadarajan \cite{dv07} refined this, showing that it is possible in $nd \cdot \poly(k/\eps)$ time to
produce a subset of $r = (k/\eps)^{O(p)}$ points, known as a {\it weak coreset},
whose span contains a $k$-dimensional subspace whose cost is at most a factor of $(1+\eps)$ times
the optimal cost. By projecting the $n$ input points onto the span of these $r$ points, one can find
this $k$-dimensional subspace in time exponential in the smaller dimension $r$ using the approach in
\cite{sv07}. The authors thus make the important step of isolating the ``dimension reduction'' step of
the problem from the ``enumeration'' step. This is useful in practice since one can run heuristics
in place of enumeration on the weak coreset, potentially allowing for $k/\eps$ to be much larger while 
still obtaining efficient algorithms \cite{ding2006r,fl11}. 

The time complexity for $p = 1$ was improved
by Feldman et al. \cite{FMSW} to $nd \cdot \poly(k/\eps) + (n+d) \cdot \exp(\poly(k/\eps))$, and
later for general $p$ to $nd \cdot \poly(k/\eps) + \exp((k/\eps)^{O(p)})$ by Feldman and Langberg
\cite{fl11,f14}. The latter work, together with work by Vadarajaran and Xiao \cite{vx12}, also
gives a {\it strong coreset} for \textsf{Subspace Approximation}, i.e., a way of reducing the number of rows
of $A$ so as to obtain a matrix $A'$ so that the cost of fitting the rows of $A'$ to any $k$-dimensional
subspace of $F$ is within a $1+\eps$ factor of the cost of fitting the rows of $A$ to $F$. 

On the hardness side, for constant $p > 2$, Deshpande et al. \cite{dtv11} first give an algorithm
showing it
is possible to obtain a constant factor approximation in $\poly(nd)$ time. They also show 
that for $p > 2$, assuming the Unique Games Conjecture (UGC), that the problem is hard to 
approximate within
the same constant factor, while they show NP-hardness for $p > 2$ to approximate within
a $(1+1/\poly(nd))$-factor. Later, Guruswami et al. \cite{grsw12} show the same constant
factor hardness for $p > 2$ without
the UGC, namely, they show NP-hardness for $p > 2$. 

\subsection{Our Contributions}
Despite the progress on this problem, there are several natural questions that remain open.
On the algorithmic side, a natural question is whether it is possible to obtain a running time
proportional to the number $\nnz(A)$ of non-zero entries of $A$. 
This would match the leading order term in the $p = 2$ case, improving the $nd\cdot \poly(k/\eps)$ leading
order term in the previous works mentioned above (which may be improvable to $O(\nnz(A) \poly(k/\eps))$), and 
join a growing body of
work in numerical linear algebra whose aim is to achieve a running time with leading order term a constant
times the sparsity of the input matrix \cite{bn13,CW13,mm13,nn13,wz13}. 
Our first result is the following. We note that all algorithms mentioned in the following theorems succeed
with constant probability, which can be made arbitrarily small by independent repetition. 

\begin{theorem}[A version of Theorem~\ref{thm Approx Lp}] %
\ifSUBM $\!\!\!$\footnote{We include forward links beyond the ten-page limit, but the first ten pages are self-contained.
The full version contains exactly the same first ten pages (which are its introduction), and the forward links work.}\fi
For any $k \geq 1, \eps \in (0,1),$ and $1 \leq p = a/b\in [1,2)$ for integer constants $a,b$,  
there is an $O(\nnz(A)) + (n+d)\poly(k/\eps) + \exp(\poly(k/\eps))$ time algorithm for the
{\sf Subspace Approximation} problem with $M(x) = |x|^p$. 
\end{theorem}
\hspace{5mm} We note that our algorithm is optimal, up to a constant factor, for $k/\eps$ not too large and 
$\nnz(A) \geq (n+d)\poly(k/\eps)$; indeed, in this case the time is $O(\nnz(A))$ and any algorithm
achieving relative error needs to spend $\Omega(\nnz(A))$ time. Moroever, as discussed above, if
one just wants a dimensionality reduction to a set of $\poly(k/\eps)$ points whose span contains
a $k$-dimensional subspace which is a $(1+\eps)$-approximation, then the time is
$O(\nnz(A)) + (n+d)\poly(k/\eps)$, that is, the $\exp(\poly(k/\eps))$ term is removed. This is useful
for large values of $k/\eps$ for which the heuristics mentioned above can be run. 

Another question is whether the $\exp(\poly(k/\eps))$ term in the time complexity is necessary in the
previous theorem. 
All previous algorithms have such a term in their complexity, while known hardness results
apply only for $p > 2$. The need for $p > 2$ is essential in previous hardness results, as the hard instances
in \cite{dtv11} (and similarly \cite{grsw12} which builds upon \cite{dtv11}) become easy for $p < 2$. 
Indeed, the inapproximability ratio shown in these works is $\gamma_p$, the $p$-th moment of a standard
normal distribution, which is less than $1$ for $p \in [1,2)$. We 
note that \cite{dtv11} also shows a weaker NP-hardness but also
only for $p > 2$, and Case 1 in their proof heavily relies on the assumption that $p > 2$. 
In Section 1.4 of the monograph of Kannan
and Vempala, the second open question is whether it is NP-hard to find a subspace of dimension at most
$k$ that minimizes the sum of distances of the points to a subspace, i.e., the $p = 1$ case in our notation.
We resolve this question as follows.

\begin{theorem}[Informal version of Theorem~\ref{thm:main:hardness}]
For any $p \in [1,2)$, 
it is NP-hard to solve the {\sf Subspace Approximation} problem up to a factor of $1+1/\poly(d)$. 
\end{theorem}
\hspace{5mm} Our result, when combined with the hardness results for $p > 2$, shows there is a {\it singularity}
at $p = 2$, namely, for $p = 2$ there is a polynomial time algorithm for any $k, \eps$, while for any other
$p$ the problem is NP-hard. It also shows there cannot be an algorithm running in time polynomial in $k$ and $1/\eps$,
unless P = NP. 

Next, we consider the many other loss functions used in practice, in particular, 
those for $M$-Estimators. 
This has been studied in \cite{fs12} for point and line median, and recently in \cite{cw15} for
regression. Such loss functions include important special cases such as the Huber loss
function, the $\ell_1-\ell_2$ loss, the Tukey function, etc. We refer the reader to \cite{cw15}
for more discussion on these. Many of these loss functions have the property that $M(x) \approx x^2$
for $x$ near the origin, while $M(x) \approx |x|$ for larger $x$. Thus, they enjoy 
the smoothness
properties of $\ell_2^2$ yet also the robustness properties of $\ell_1$. 
As one practical example: in the context of analysis of astronomical spectra, Budavari et al. \cite{BWSDY}
give an algorithm for robust PCA, using an $M$-Estimator in a way quite similar to ours.
One challenge that arises
with $M$-Estimators is that unlike norms they are not scale-invariant, and may have very different
behaviors in different regimes of input values. Prior to this work, to the best of our knowledge no
such results were known in the context of low rank approximation. We give the first algorithm for a
general class of $M$-Estimators for a fixed constant factor approximation; moreover the time complexity is
nearly linear in $\nnz(A)$. We also give two general dimensionality reduction results for general
$(1+\eps)$-approximation, in the spirit of the coreset results stated above.

\begin{definition}[\emph{nice} functions for $M$-estimators, $\cM_2$, $\cL_p$]
We say an $M$-Estimator is {\bf nice} if $M(x) = M(-x), M(0) = 0$, $M$ is non-decreasing in $|x|$,
there is a constant $C_M > 0$ and a constant $p \geq 1$ so that for all $a,b \in \mathbb{R}^+$ with $a \geq b$, we have
$$C_M \frac{|a|}{|b|} \leq \frac{M(a)}{M(b)} \leq \left(\frac{a}{b} \right )^p,$$
and also that  $M(x)^{1/p}$ is subadditive, that is, $M(x+y)^{1/p} \leq M(x)^{1/p} + M(y)^{1/p}$.
 
Let $\cM_2$ denote the set of such nice $M$-estimators, for $p=2$. Let $\cL_p$ denote $M$-estimators
with $M(x)=|x|^p$ and $p\in [1,2)$.
\end{definition}

\begin{remark}
Well-studied $M$-Estimators such as the $L_1-L_2$ loss $M(x) = 2(\sqrt{1+x^2/2}-1)$, 
the Fair estimator loss $M(x) = c^2 \left [\frac{|x|}{c} - \log(1+\frac{|x|}{c}) \right ]$, 
and the Huber loss $M(x) = x^2/(2\tau)$ if $|x| \leq \tau$, and $M(x) = |x|-\tau/2$ otherwise, are all
nice $M$-Estimators. (The proof that $L_1-L_2$ is subadditive requires a calculation%
\ifCONF
, included in the full paper.%
\else
; we've included one in Appendix \ref{app:l1l2loss}. %
\fi
)

Here, $c$ and $\tau$ are positive constants. The linear growth lower bound is satisfied
by any convex function $M$, though in general a nice estimator need not be convex. The linear growth lower bound
also rules out \emph{redescending} $M$-estimators, for which $M'(x)\rightarrow 0$ as $|x|\rightarrow\infty$,
but note that we allow $M'(x)$ to decrease, just not all the way to zero. We can allow
$C_M\le M'(x)$ to be arbitrarily small, at a computational cost, so loosely speaking we can
get ``close'' to some redescending $M$-estimators.
\end{remark}

\begin{theorem}[Informal, from Theorems~\ref{thm Approx M2}, \ref{thm const approx}, and \ref{thm non adapt}]
\label{thm:mixedResults}
For any nice $M$-Estimator $M()$ in $\cM_2$, integer $k>0$,
and $\eps \in (0,1)$, for a parameter $K$ in $(\log n)^{O(\log k)}$,
we can in $O(\nnz(A)\log n + (n+d) \poly(K/\eps) )$
time reduce {\sf Subspace Approximation} for $M$
to the problem of solving an instance of $\min_{\rank X=k} \sum_{i\in [n]} M(\norm{\hat A_{i*}XB - C_{i*}}_2)$,
for matrices $\hat A$, $B$, and $C$ with dimensions in $\poly(K\eps^{-1}\log n)$.

In time $O(\nnz(A)+ (n+d)\poly(k))$, we can find a subspace
of dimension $\poly(k\log n)$ whose cost is within $K$ of the best $k$-dimensional subspace.

In time $O(\nnz(A)\log n +  (n+d)\poly(K/\eps) )$, we can find a a subspace of dimension $\poly(K/\eps)$
that contains a $k$-dimensional subspace whose cost is within $1+\eps$ of the best $k$-dimensional subspace.
\end{theorem}
Thus, we make significant progress for nice $M$-estimators for \textsf{Subspace Approximation}.

Finally, using the techniques developed here for $M$-Estimators, we are able to strengthen the results for
$M$-Estimators for the {\sf Regression Problem} in \cite{cw15}, which is the problem of finding an $x \in \mathbb{R}^d$
for which $\|Ax-b\|_M \leq (1+\eps)\min_{x'} \|Ax'-b\|_M$ given an $n \times d$ matrix $A$ and an $n \times 1$ vector $b$.  
Here, for a vector $z \in \mathbb{R}^n$, $\|z\|_M^2 = \sum_{i=1}^n M(z_i)$. In \cite{cw15}, it was shown how to do this
for the Huber loss function in $O(\nnz(A)\log n) + \poly(d \eps^{-1} \log n)$ time, while for nice $M$-Estimators
it was shown how to, in $O(\nnz(A) \log n) + \poly(d \log n)$ time,
obtain a fixed constant-factor approximation via
sketching techniques. We improve upon the latter using sampling-based techniques. 

\begin{theorem}
For any convex $M$-Estimator in $\cM_2$, it is possible in $O(\nnz(A)\log n) + \poly(d/\eps))$ time, to solve
the {\sf Regression Problem} up to a factor of $1+\eps$.
\end{theorem}

In the remainder of this paper, we outline our techniques, first for the hardness result,
and then for the algorithms.

We use the notation $[m]\equiv \{1,2,\ldots, m\}$ for integer $m$.

\subsection{Technical Overview: Hardness}

We first observe that for the simplex $E = \{e_i \mid i \in [d]\}$, the optimal $k$-dimensional
subspace for $M(x) = |x|^p$, for $p \in [1,2)$, is exactly one of the $k$ {\it coordinate spaces}, i.e., a subspace
formed by the span of $k$ standard unit vectors. All such subspaces have the same cost, and correspond to a
subspace with $k$ leverage scores equal to $1$ and remaining leverage scores equal to $0$. In our input, we
include $\poly(d)$ copies of the simplex, which intuitively forces the optimal $k$-dimensional subspace for
our input to be very close to a coordinate space, where we formalize closeness by looking at how close the $(k+1)$-st
leverage score is to $0$. 

As part of our input, we also include $d$ points, which correspond to rows of a $d \times d$ matrix $A$. We create
$A$ from the adjacency matrix of an $r$-regular graph $G$. Namely, for a sufficiently large value $B_1 = \poly(d)$, 
for $i \neq j$, $A_{i,j} = c/\sqrt{B_1r}$ if $\{i,j\}$ is an edge
in $G$, and $A_{i,j} = 0$ otherwise, where $c = \sqrt{2} - O(1/B_1)$ in $(1,2)$. 
Also, for all $i \in [d]$,  $A_{i,i} = 1-1/B_1$. 

The goal is to decide if the maximum clique in $G$ is of size at least $k$ or at most $k-1$. Since we have forced
the optimal $k$-dimensional subspace to be a coordinate subspace, we can think of the $k$ dimensions chosen
as a set $S$ of $k$ vertices in $G$, which correspond to rows of $A$. The contribution of one row $A_i$ of $A$ to
the objective function is $(1-\|A_i^S\|_2^2)^{p/2}$, 
where $A_i^S$ is a vector which agrees with $A_i$ on coordinates in $S$, and is
$0$ on coordinates outside of $S$. If $i \in S$, then we can show 
$(1-\|A_i^S\|_2^2)^{p/2} = (1/B_1^{p/2})(2-2e(i,S)/r - O(1/B_1))^{p/2}$, where $e(i,S)$ is the number of edges
from vertex $i$ to vertices in the set $S \setminus \{i\}$, which can be at most $k-1$. Further,
one can assume $k \leq r$.  
On the other hand, if $i \notin S$, then
$(1-\|A_i^S\|_2^2)^{p/2} = 1- O(1/B_1)$. Since $|S| = k$, the contribution to the objective function
from all $i \notin S$ is $d-k - O(d/B_1)$. Note that the contribution from a single $i \in S$ is
$(1/B_1^{p/2}) (2- 2e(i,S)/r - O(1/B_1))^{p/2}$, and since $p < 2$ and $e(i,S)$ is an integer less than $r$, this is much larger than $O(d/B_1)$
for $B_1 = \poly(d)$ sufficiently large. Therefore, we can think of the contribution from all $i \notin S$
as being the fixed value $d-k$. One can show then if there is a clique of size at least $k$, that the
contribution from all $i \in S$ is a $(1+1/\poly(d))$ factor larger than if the clique size is at most $k-1$. 

In the proof above, we note that the clique size enters as a {\it low order term}, but we are able to fix
the high order terms so we can still extract it with a $(1+1/\poly(d))$-approximation. Finally, we show
that if the subspace is close enough to a coordinate subspace, the analysis above goes through; 
otherwise it is too far from a coordinate subspace, and the cost just on the copies of the
simplex alone is too large. 

\subsection{Technical Overview: Algorithms}

As with many recent papers on randomized numerical linear algebra, we use
a series of randomized matrix techniques, which we generically call \emph{sketching},
to reduce
the original problem to problems involving matrices with fewer rows, or columns, or both.
We extend or speed up these methods.

In the following, we discuss a series of these methods
and the context in which we use them; first, a sketching matrix that reduces the
dimensionality (the number of columns), then sampling that reduces the number of points (rows),
then dimensionality again, then points again, and then the solution of the resulting small optimization problems.
We then discuss, in \S\ref{subsubsec lev score intro}, the
fast estimation of leverage scores, followed by a discussion of the particular challenges 
of general $M$-estimators (versus $M(x)=|x|^p$, which we discuss more up to then). The technical overview
concludes with the formal statement of our regression result.

\textsf{Subspace Approximation} can be expressed in terms of a matrix
measure, defined as follows, that for some kinds of $M()$ is a norm.

\begin{definition}[definitions of $\norm{}_v$, $p$, $\norm{}_M$, $\norme{}$.]\label{def p vnorms}
For an $n \times d$ matrix $A$, we define
the \emph{$v$-norm} of $A$, denoted $\norm{A}_v$, to be
$\left[\sum_{1\le i\le n} M(\|A_{i,*}\|_2)\right]^{1/p}$, 
where $A_{i,*}$ is the $i$-th row of $A$, and
$p$ is a parameter associated with the function $M()$, which defines
a nice $M$-Estimator. We use the terminology $v$-norm, 
where $v$ stands for "vertical", to indicate that we take the sum of 
distances of the rows of the matrix. 
Here $M$ and $p$ will be understood from context;
our constructions never consider multiple $M$ and $p$ at the same time.
That is, for $M(x)=|x|^{p'}$,
the associated parameter $p$ is $p'$.
For a column vector $x$, we will write $\norm{x}_M=[\sum_{1\le i\le n} M(x_i)]^{1/p}$ for $\norm{x}_v$.
We also use an ``element-wise'' norm, with $\norme{A}^p$ equal to $\sum_{i\in[n], j\in[d]} M(A_{ij})$.
\end{definition}

This is the \emph{unweighted} $v$-norm; we will later use a version with weights. The ``$v$'' refers to
the ``vertical'' application of the $\ell_p$ norm. 

The subadditivity assumption for nice $M()$ implies that
$\norm{A-\hat A}_v$ is a metric on $A,\hat A \in\R^{n\times d}$, so that in particular
it satisfies the triangle inequality.
Using the polynomial upper bound and linear lower bound for $M()$, we have
for $\kappa \ge 1$,
\begin{equation}\label{eq scale insens}
(C_M\kappa)^{1/p} \norm{A}_v \le \norm{\kappa A}_v \le \kappa \norm{A}_v.
\end{equation}
While matrix norms satisfy the \emph{scale-invariance} condition
$\norm{\alpha A} = \alpha\norm{A}$ for all $\alpha\ge 0$,
here we will generally assume only this weaker condition of ``scale insensitivity.''
Despite this weaker condition, many constructions on metrics carry over,
as discussed in \S\ref{subsec scale insens}.

\subsubsection{Dimensionality reduction, I}


A prior result for $p = 2$ is that for suitable
$R\in\R^{d\times O(k/\eps)}$ randomly chosen so that the columns of $AR$ comprise $O(k/\eps)$
random linear combinations of the columns of $A$, it holds that
\[
\min_{\rank X = k}\|ARX-A\|_F \leq (1+\eps) \|A-A_k\|_F,
\]
where $A_k$ is the best rank-$k$ approximation to $A$ in Frobenius norm. 
The proof of this uses specific properties of the Frobenius norm such as approximate
matrix product \cite{s06,CW09,kn14} and the matrix Pythagorean theorem, and a 
natural question is if the same is true for any $p$. 

One of our key structural results is the following theorem, holding for nice $M$-estimators.

\begin{theorem}[A version of Theorem~\ref{thm good colspace sparse}]
\label{thm good colspace sparse informal}
If $R\in\R^{d\times m}$ is a sparse embedding matrix
with sparsity parameter $s$,
there is $s=O(p^3/\eps)$ and $m=O(k^2/\eps^{O(p)})=\poly(k/\eps)$ such that
with constant probability,
\begin{equation}\label{eq good colspace sparse informal}
\min_{\rank X = k} \norm{ARX-A}_v^p \le (1+\eps)\norm{A - A_k}_v^p.
\end{equation}
for $X$ of appropriate dimension. Here  $A_k \equiv \min_{\rank Y = k} \norm{Y - A}_v$.
\end{theorem}

Here the given matrix $R$ is a particular construction of
an $(\eps, \delta)$-subspace 
embedding for $k$-dimensional spaces.
A matrix is such an embedding if, for the row space of any fixed matrix $B$ of rank at most $k$, 
with probability at least $1-\delta$ it
holds that 
$\|yR\|_2 = (1 \pm \eps)\|y\|_2$ simultaneously for all $y$ in the row span of $B$.
To prove the above theorem, we show that 
if $R$ is an $(\eps^{p+1}, \eps^{p+1})$-subspace embedding for $k$-dimensional spaces, 
and $\|BR\|_v \leq (1+\eps^{p+1})\|B\|_v$ for a fixed matrix $B$, then
the theorem conclusion holds.
Using known 
subspace embeddings \cite{CW13,mm13,nn13,bdn15}, 
we can choose $R$ with $\poly(k/\eps)$ columns to satisfy \eqref{eq good colspace sparse informal} and moreover,
compute $AR$ in $O(\nnz(A)/\eps)$ time. We will later apply such $R$ with constant $\eps$. 

\begin{remark}
The above theorem, like many here, does not require $p<2$ for $M(x)=|x|^p$, or all the properties
of nice estimators $\cM_2$. We may not state all results in their fullest generality in this respect,
but there will be bounds ``$O(p)$'' that are unnecessary for our present algorithmic results.
\end{remark}

\subsubsection{Point reduction, I}\label{subsubsec point red 1}

As well as reducing the number of columns of the input matrix, we also need to reduce the number of rows.
Since $\norm{A}_v$ is based on the Euclidean norm of the rows $a_i$ (we will also write $A_{i*}$ for those rows),
many standard subspace embedding techniques can be applied ``on the right,'' taking $a_i$ to a smaller row $a_iR$,
with $\norm{a_i}_2\approx\norm{a_iR}_2$.
There are many fewer techniques applicable in our setting for application ``on the left,''
reducing the number of rows;
our algorithms perform all such reductions by sampling the rows.

A \emph{sampling} matrix $S$ is one whose rows are multiples of the
natural basis vectors $e_i, i\in[n]$. The sketch $SA$ has rows that are each multiples of some row
of $A$. Such sampling matrices will be found here
based on a vector $q\in\R^n$ of probabilities (with $q_i\in [0,1]$),
so that for each $i\in [n]$, the natural basis vector $e_i$
is independently chosen to be a row
of $S$ with probability $q_i$.  This implies that the number $m$ of rows
of $S$ is a random variable with expectation $\sum_i q_i$ (although indeed,
it is well-concentrated). We scale $e_i$ by $1/q_i^{1/p}$, for $M(x)=|x|^p$;
more generally we use a weighted version of the $v$-norm, since we cannot assume scale invariance.
With that scaling, $\E[\norm{S\hat A}_v] = \norm{S\hat A}_v$, for any $\hat A$.
(That is, any $\hat A$ that has $n$ rows; in general we assume that matrix operands
are conformable in shape for the operations done.)

The vector $q$ used for this importance sampling is based on
norms of rows of associated matrices; for example,
for the thin matrix $AR$ above, and $p=1$, we compute
a \emph{well-conditioned basis} for the columns of $AR$,
and $q_i$ is proportional to the $\ell_1$ norm of row $i$ of that basis.
Using these \emph{$\ell_1$-leverage scores} for sampling rows
goes back
to at least \cite{c05,dv07}. Algorithm~\ref{alg:M:informal}, using
such sampling, is a version of one of our algorithms. We note that it may be 
possible to further optimize the $\poly(k/\eps)$ factors in our algorithm using \cite{cp15}. 

A disadvantage of our sampling methods is that the sample
size depends on the number of columns of the matrix, so the row sample size for $AR$
can be much smaller than it would be for $A$; this is one reason that reducing the number of
columns is useful.

\begin{figure}
\begin{algorithm}[H]
\caption{$\textsc{ConstApprox}\cL_p(A, k)$}
\label{alg:M:informal}
(Simplified version of Algorithm~\ref{alg:M}, specialized to $\cL_p$)\\
{\bf Input:} $A \in \R^{n\times d}$, integer $k \geq 1$\\
{\bf Output:} $\hat X = UU^\top$, where $U\in\R^{d\times P_M}$ with orthonormal columns,
for a parameter $P_M$.
\begin{enumerate}
\item For parameter $m=\poly(k)$, let $R\in\R^{d\times m}$ be a sparse embedding matrix from
Theorem~\ref{thm good colspace sparse informal} with constant $\eps$
\item Compute a well conditioned basis of $AR$ (Def. \ref{def:wcb}, Thm.~\ref{thm well cond}\ifSUBM in this submision\fi),
and leverage scores $q'_i$ 
\item Let $S$ be a sampling matrix for $AR$,
	using probabilities $q_i \gets \min\{1, \poly(k)q'_i/\sum_i q'_i\}$
\item {\bf return} $\hat X = UU^\top$, where $U^\top$ is an orthonormal basis for the rowspace of $SA$.
\end{enumerate}
\end{algorithm}
\end{figure}

The next lemma is one we use for our analysis of this algorithm.
It claims a property for the sampled matrix that is cruder
than a subspace embedding, but holds for nice $M$-estimators,
and is used in our proof that Algorithm~\ref{alg:M:informal}
gives a bicriteria constant-factor approximate solution
for $M$-estimators with $M(x)=|x|^p$.

\begin{lemma}[A version of Lemma~\ref{lem S bb}]
\label{lem S bb informal}
Let $\rho>0$ and $B\in\R^{n\times r}$, with $r=\poly(k)$.
For sampling matrix $S$,
suppose for given $y\in\R^d$, with failure probability $\delta$ it holds
that $\norm{SBy}_M = (1\pm 1/10) \norm{By}_M$.
There is $K_1 = \poly(k)$ so that
with failure probability $\delta \exp(\poly(k))$,
any rank-$O(k)$ matrix $X\in\R^{d\times d}$
has the 
property that if $\norm{BX}_v \ge  K_1\rho$, then $\norm{SBX}_v \ge \rho$,
and that if $\norm{BX}_v \le \rho/K_1$, then $\norm{SBX}_v\le \rho$.
\end{lemma}

Our proof is roughly as follows.
We apply this lemma with $B=AR$. Letting $X_1$ be the minimizer
of $\norm{ARX-A}_v$ over rank-$k$ matrices, we use
the triangle inequality, so that for any $Y\in\R^{r\times d}$,
\[
\norm{S(ARY-A)}_v \ge \norm{S(ARY - ARX_1)}_v -  \norm{S(ARX_1 - A)}_v.
\]
We apply the lemma with $\rho = 10 \Delta_1$, letting $\Delta_1\equiv\norm{ARX_1-A}_v$,
to show that if $\norm{ARY - ARX_1}_v>K_1 10\Delta_1$, then $\norm{S(ARY - ARX_1)}_v\ge 10\Delta_1$.
Since $\E[ \norm{S(ARX_1 - A)}_v] =\Delta_1$,
with probability at least $4/5$, $\norm{S(ARX_1 - A)}_v\le 5\Delta_1$, so assuming that this and the
inequality from the lemma hold, we have $\norm{S(ARY - A)}_v \ge (10-5)\Delta_1$.
So any $Y$ with high cost $\norm{ARY - A}_v$ will have high estimated cost $\norm{S(ARY - A)}_v$,
and $X_2$ cannot be $Y$. The fact that $\Delta_1$ is not much larger than $\Delta^*$
implies that the matrix $X_2$ minimizing $\norm{SARX-SA}_v$ will have $\norm{ARX_2-A}_v$ within
a $\poly(k)$ factor of $\Delta^*$. Moreover, it is not hard to show that
the rows of $X_2$ are in the row space of $SA$, and therefore the projection $AUU^\top$ of
$A$ onto the row space of $SA$ has $\norm{A-AUU^\top}_v$ within a $\poly(k)$ factor of $\Delta^*$,
and the row space of $SA$ is a bicriteria $\poly(k)$-factor approximation.

(We may sometimes informally refer to $\poly(k)$ or $\poly(k/\eps)$ as ``constant,'' since our focus 
is removing dependence on $n$ and $d$.)

\subsubsection{Dimensionality reduction, II}

A  $\poly(k)$-factor bicriteria approximation $\hat X$ is useful in its own right,
but it can be used to obtain a different dimensionality reduction: a subspace, expressed as
the row space $F=\rspace(U^\top)$ for $U\in\R^{d\times\poly(k/\eps)}$ with orthonormal columns,
such that the optimum $k$-dimensional space contained in $F$ is
an $\eps$-approximate solution to the original problem, that is,
\begin{equation}\label{eq dv reduce}
\argmin_{\rank X=k}\norm{A - AUXU^\top}_v
\end{equation}
is an $\eps$-approximate solution for \textsf{Subspace Approximation}.

As noted above, the existence of a subspace of dimension
$\poly(k/\eps)$ that contains an approximate solution
was shown by Deshpande et al. \cite{dv07}. Here we extend their result in a few ways.
For one, we show that the claim holds for nice $M$-estimators as well as $M(x)=|x|^p$.

Another of our extensions is computational. The proof of \cite{dv07} is by way of an algorithm
that samples rows according to their residual distance to a subspace $V$, which
is initially $F$, and is extended by replacing $V$ by its span with each sampled row as it is chosen.
Such \emph{adaptive} sampling makes it impossible to achieve a running
time of $O(\nnz(A))$. We show that the same algorithm as in \cite{dv07} works even if the sampling is done
non-adaptively, that is, using distance to $F$. (Their proof also nearly applies.)
This may be of independent interest. Indeed, while for the Frobenius norm one can 
non-adaptively sample
with respect to the residual of a $\poly(k)$ approximation to refine to a $(1+\eps)$-approximation
\cite{dv06}, such
a result was not known for other loss functions $M$. Our formal statement, for a procedure \textsc{DimReduce},
is as follows. This procedure incorporates a scheme for fast estimation of residual norms (another of our extensions),
discussed in \S\ref{subsubsec lev score intro} below.

\begin{theorem}[A version of Theorem~\ref{thm non adapt}]
\label{thm non adapt informal}
Let $K>0$ and $\hat X\in\R^{d\times d}$ be a projection matrix such that $\norm{A(I-\hX)}_v \le K\Delta^*$;
as usual $\Delta^*\equiv \norm{A(I-X^*)}_v$, with $X^*\equiv \argmin_{\rank X=k}\norm{A(I-X)}_v$.
Then with small constant failure probability,
$\textsc{DimReduce}(A,k,\hX)$ returns $U\in\R^{d\times K \poly(k/\eps)}$ such that
\[
\min_{\rank X=k}\norm{A(I-UXU^\top)}_v\le (1+\eps)\Delta^*.
\]
The running time is $O(\nnz(A) + dK^2\poly(k/\eps))$  for $M(x)=|x|^p$  and
$O(\nnz(A)\log n+ dK^2\poly(k/\eps))$ for nice $M$-estimators.
\end{theorem}

\subsubsection{Point reduction, II}

The formulation \eqref{eq dv reduce} is computationally useful: for one, it allows use of sparse subspace embedding
matrices, so that there is a randomized construction
of $S\in\R^{\poly(k/\eps)\times d}$ such that with constant failure probability,
$\argmin_{\rank X=k}\norm{AS^\top - AUXU^\top S^\top}_v$ is an $\eps$-approximate
solution to \eqref{eq dv reduce}, and therefore to \textsf{Subspace Approximation}. That is,
by applying $S^\top$ to $A$ in $\nnz(A)$ time, and to $U$ in $d K \poly(k/\eps)$ time,
we have \emph{almost} removed $d$ from the problem.

What remains that depends on $d$
is $AU$, which we cannot afford the time to explicitly compute (taking $\Omega(\nnz(A)\poly(k/\eps))$ with standard methods).
However, the fact that $AU$ and $AS^\top$ have $\poly(k/\eps)$ columns
implies that row sampling can be applied effectively (since again, the row sample size depends
on the number of columns). Our strategy is to use row sampling, via probabilities proportional
to the leverage scores
of the thin matrix $[AS^\top\ AU]$, but we need to compute those leverage scores carefully,
without computing $AU$ explicitly. Having obtained those sampling probabilities,
we obtain a sampling matrix $T$. We now seek an approximate solution to
$\min_{\rank X=k}\norm{TAS^\top - TAUXU^\top S^\top}_v$,
a problem for which (for $M(x)=|x|^p$)  the dimensions $TAS^\top$, $TAU$, and $U^\top S^\top$ are all
in $\poly(k/\eps)$, and we can afford to compute them. (We compute $TAU$, for example, as $(TA)U$.)

\subsubsection{Solving Small Problems}

Finally, we need to solve this small problem.
While
there exist fairly involved net arguments (see, e.g., Section 5 of \cite{FMSW}) for solving small instances
of \textsf{Subspace Approximation}, at least for $p = 1$,
we formulate the problem as a system of polynomial
inequalities and immediately find this subspace in time $\exp(\poly(k/\eps))$ by a black box use of an
algorithm of Basu, Pollack, and Roy \cite{bpr96}. Our result is the following.

\begin{theorem}[A version of Theorem~\ref{thm small approx}]
\label{thm small approx intro}
Assume $p=a/b$ for integer constants $a,b\ge 1$, and let $\eps\in (0,1)$, and integer $k\in[0,m]$.
Given $A\in\R^{m'\times m}$, $B\in\R^{m\times m''}$, and $C\in\R^{m'\times m''}$, with $m',m'' = \poly(m/\eps)$,
a rank-$k$ projection matrix $X$ can be found that minimizes $\norm{AXB-C}_v^p$ up a $(1+\eps)$-factor,
in $\exp(\poly(m/\eps))$ time.
\end{theorem}

\subsubsection{Fast leverage score estimation}\label{subsubsec lev score intro}
An important consideration in our algorithms is the leading order term $\nnz(A)$; 
some parts of the analysis could be simplified if this were instead replaced
with $\nnz(A) \log n$, and if $\poly(k/\epsilon)$ is larger than $\log n$,
this is already a substantial improvement over the previous $\nnz(A) \poly(k/\epsilon)$
time algorithms. One may set $k$ and $1/\epsilon$ to be large if one is interested in a bicriteria
solution or dimensionality reduction, after which various heuristics can be 
run \cite{ding2006r,fl11,BWSDY}. 

However, if one is going to run an $\exp(\poly(k/\epsilon))$ time
algorithm on the small problem to find a rank-$k$ space, then it is also interesting
to allow $\poly(k/\epsilon) \leq \log n$. In this case, 
we still improve over prior work by achieving an optimal $O(\nnz(A))$ time, rather than
just $O(\nnz(A) \log n)$. 
This causes some complications in the computation of leverage scores;
as discussed in \S\ref{subsubsec point red 1}, some of our sampling matrices
use sampling probabilities proportional to leverage scores,
which are norms of well-conditioned bases.

\begin{definition}[Well-conditioned basis for the $p$-norm]\label{def:wcb}
An $n \times d$ matrix $U$ is an $(\alpha, \beta,p)$-well conditioned
basis for the column space of $A$ if, using $M(x)=|x|^p$,
(1) $\norme{U} \leq \alpha$ (where $\norme{}$ was defined in Def.~\ref{def p vnorms}), and
(2) for all $x \in \mathbb{R}^d$,
$\|x\|_{q} \leq \beta \|Ux\|_p$, where $1/p + 1/q = 1$. For ease of notation we will just say that
$U$ is a well-conditioned basis for $A$ if $\alpha, \beta = d^{O(p)}$, where $p$ is understood from 
context.  
\end{definition}

We use the following scheme to find well-conditioned bases.

\begin{theorem}\label{thm well cond}
Suppose $H\in\R^{d\times m}$. 
Suppose $\Pi\in\R^{s\times n}$ is an $\ell_p$ subspace embedding for the column space of $AH$, meaning
$\norm{\Pi AHx}_p^p = (1 \pm 1/2)\norm{AHx}_p^p$ for all $x \in \mathbb{R}^m$. Suppose we compute a $QR$-factorization
of $\Pi AH$, so that $\Pi AH = QR$, where $Q$ has orthonormal columns. Then $AHR^{-1}$ is a
$(\poly(m), 2, p)$-well conditioned basis for the column space of $AH$. 
There are $\ell_p$ subspace embeddings $\Pi$ with $s=\poly(m)$ for $p\in [1,2)$
that can be applied in $O(\nnz(A))$ time, so that $R^{-1}$ can be computed in
$O(\nnz(A)+\poly(m/\eps))$ time.
\end{theorem}

\ifCONF
\begin{IEEEproof}
\else
\begin{proof}
\fi
The existence of such $\Pi$ is shown  by \cite{mm13},
who also discuss the well-conditioned basis construction \cite{sw11}.
\ifCONF
\end{IEEEproof}
\else
\end{proof}
\fi

Given a well-conditioned basis $U$, here given implicitly as the product $AHR^{-1}$,
we need to estimate the norms of its rows. In prior work, this norm estimation was
done with a JL matrix, for example, a matrix $G\in\R^{m\times O(\log n)}$ of Gaussians
such that the row norms of $AHR^{-1}G$ are all approximately the same as those of
$AHR^{-1}$. We show that Gaussians with a constant number of rows, or even one row,
can be used, and still yield estimates that are algorithmically adequate.
We use a similar scheme for residual sampling.

\begin{theorem}[A special case of Theorem~\ref{thm G est}]
\label{thm G est informal}
Let $t_M\equiv 1$ for $M\in\cL_p$, and $t_M$ a large enough constant, for $M\in\cM_2$.
For matrix $U\in\R^{n\times d}$, suppose a sampling matrix $S$ using probabilities
$z_i\equiv\min \{1, r_1 z'_i/\sum_i z'_i\}$, where $z'_i=\norm{U_{i*}}_p^p$,
has small constant failure probability,
for some success criterion. (Here we require that the criterion allows oversampling.)
Let $G\in\R^{d\times t_M}$ be a random matrix with independent Gaussian entries
with mean 0 and variance $1/t_M$.
Then for $M\in\cL_p$,
a sampling matrix chosen with probabilities
\[
q_i\equiv\min \{1, K_2 d^{p/2} r_1^{p+1} q'_i/\sum_i q'_i\},
\]
where $q'_i \equiv |U_{i*}G|^p$,
also succeeds with small constant failure probability.
For $M\in\cM_2$, the same performance bound holds with $d^{p/2}r_1^{p+1}$
replaced by $r_1n^{O(1/t_M)}\log n$, with failure probability $1/n$.
\end{theorem}

Note that we apply the lemma to matrices $U$ with a small number of columns.

\subsubsection{Algorithms for $M$-Estimators}

Our results for general nice $M$-estimators, in $\cM_2$,
are weaker than for estimators in $\cL_p$.
There are various reasons for this, but the chief one is that effective row sampling matrices are
harder to come by. Leverage-score sampling is effective because of the bounds stated in the following
lemma. As applied to $\cL_p$, the stated bounds go back to  \cite{dv07}

\begin{lemma}[A version of Lemma~\ref{lem M lev}]
\label{lem M lev informal}
For nice $M$-estimators,
\[
\sup_{y\in\R^d} \frac{M(A_{i*}y)}{\norm{Ay}_M^p}
	\le \gamma_i(A,M) \equiv\max\{\beta\norm{U_{i*}}_p/C_M,\beta^p\norm{U_{i*}}_p^p\},
\]
where $U$ is an $(\alpha,\beta,p)$-well-conditioned basis of $A$.
For estimators in $\cL_p$, with $M(x)=|x|^p$ with $p\ge 1$,
$\gamma_i(A, M)$ can be sharpened to $\beta^p\norm{U_{i*}}_p^p$, and
for $\gamma(A,M)\equiv \sum_i \gamma_i(A,M)$ we have
$\gamma(A,M) = O((\alpha\beta)^p) = O(d^{\max\{p,1+p/2\}})$.
For $M\in \cM_2$, a general nice $M$-estimator with $p\le 2$,
$U$ can be an orthogonal basis of $A$, and
$\gamma(A,M) \equiv \sum_i \gamma_i(A,M) = O(\sqrt{{d} n})/C_M$.
\end{lemma}

The quantity $\frac{M(A_{i*}y)}{\norm{Ay}_M^p}$ is a kind of \emph{sensitivity} score \cite{vx12},
capturing how much effect the $i$'th summand $M(A_{i*}y)$ can have on the sum $\norm{Ay}_M^p$
of all such values. The critical quantity is the total $\gamma(A,M)$ of these sensitivities, which
determines the row sample size. Where for $\cL_p$, that size is $\poly(d)$,
for $\cM_2$ it is $\Omega(\sqrt{n})$. That is, for $M$-estimators, row sampling
only reduces the problem size from $n$ to $O(\sqrt{n})$ as a function of $n$, and recursive applications
of sampling are needed to get problems down to $\poly(d)$.

\subsubsection{Regression} 

A simple byproduct of our machinery for \textsf{Subspace Approximation} is
a regression algorithm for convex $M$-estimators.

\begin{theorem}\label{thm regress}
For vector $b\in\R^n$ and convex $M\in\cM_2$,
there is a procedure that with small constant failure probability
finds
an $\eps$-approximate solution to $\min_{x\in\R^d}\norm{Ax-b}_v$.
The procedure takes $O(\nnz(A) + \poly(d/\eps))$ time.
\end{theorem}

We use fast leverage score estimation, and row sampling. This broadens
the results of  \cite{cw15}, where a similar result was shown for the Huber estimator only.
\ifCONF
The proof is in the full paper.

Note that almost all proofs are omitted in this version.
\else
The proof is in \S\ref{sec regress}.
\fi

\section{Notation and Terminology}\label{sec:prelim}

Again, throughout we assume that $M()$ and norm parameter $p$ from Def. \ref{def p vnorms} are fixed:
the constant factors in $O()$ may depend on $p$,
and various norms will implicitly depend on $p$. In $\poly()$, such as $\poly(k/\eps)$,
the degree may depend on $p$. As noted in the introduction, our main results are for $p \in [1,2)$,
in which case the $O(p)$ term is just $O(1)$.

In this paper, unless otherwise indicated,
$A$ is an $n\times d$ matrix,
matrix $B\in\R^{n\times d'}$, and
constraint set $\cC\subset\R^{d\times d'}$. Throughout we assume that the
error parameter $\eps$ is smaller than an appropriate constant.

Let $A_{i*}$ denote $a_i$, the $i$-th row of $A$, and $A_{*j}$ denote the $j$-th column.

\begin{definition}[weighted $\norm{\,}_v$, $\norme{\,}$, $\norm{\,}_M$]\label{def w8 v}
For $w\in\R^n$ with all $w_i\ge 1$,  and $M\colon \R\mapsto \R^+$ and $p\ge 1$ as is
Def. \ref{def p vnorms}, let
\[
\norm{A}_v\equiv \left[ \sum_{i\in [n]} w_i M(\norm{A_{i*}}_2)\right]^{1/p},
\]
and
\[
\norm{A}_h \equiv \norm{A^\top}_v = \left[ \sum_{j\in [d]} w_i M(\norm{A_{*j}}_2)\right]^{1/p},
\]
and let $\norme{A}$ denote $\left[\sum_{i,j} w_i M(A_{i,j}) \right]^{1/p}$.
For a given vector $x\in\R^m$, let
$\norm{x}_M \equiv \left[\sum_{i\in [m]} w_i M(x)\right]^{1/p}$.
The weight vector $w$ will be
generally be understood from context. When the relevant weight $w$ needs emphasis, we may write
$\norm{A}_{v,w}$ or $\norm{x}_{M,w}$.
\end{definition}

\begin{definition}[$X^*$,$\Delta^*$]\label{def X*}
Let
\[
X^*\equiv\argmin_{\rank X=k}\norm{A(I-X)}_v,
\]
with $\Delta^*\equiv \norm{A(I-X^*)}_v$. More generally, $X^*$ and $\Delta^*$ will be the optimum and its cost for the problem
under consideration.
\end{definition}

Note that $X^*$ will be a projection matrix
(otherwise $XY$, for $Y$ the projection onto the rowspan of $A$, would give
a better solution).


\begin{definition}[even, monotone, polynomial, linearly bounded, subadditive, nice]
As discussed above, we will need $M$ to be \emph{nice}, with these properties:
\begin{itemize}
\item \emph{even}, so $M(a) = M(-a)$;
\item \emph{monotone}, so that $M(a)\ge M(b)$ for $|a|\ge |b|$; and
\item \emph{polynomially bounded with degree $p$}, for some $p$, meaning that
$M(a)\le  M(b)(a/b)^p$ for all $a,b\in\R^+$ with $a\ge  b$. Since $p$ is fixed throughout,
we will just say that $M$ is polynomial.
\item \emph{linearly bounded below}, that is, there is some $C_M>0$ so
that $M(a)\ge C_M\frac{|a|}{|b|}M(b)$ for all $|a|\ge |b|$.
\item \emph{$p$-th root subadditive}, that is, $M(x)^{1/p}$ is subadditive, so that $\norm{A}_v$ satisfies the triangle inequality.
\end{itemize}
\end{definition}

The subadditivity assumption implies that
$\norm{A-\hat A}_v$ is a metric on $A,\hat A \in\R^{n\times d}$. We will also use,
for $X,Y\in\R^{d\times d}$, the ``norm'' $\norm{AX}_v$
and pseudometric $D_v (X,Y) \equiv \norm{A(X-Y)}_v$: the only property
of a metric that it lacks is ``identity of indiscernables'', since it may report the distance of
$X$ and $Y$ as zero when $X\ne Y$. 
Note that if $M(x)$ is subadditive,
so is $M(x)^{1/2}$.

It will be helpful that
\begin{equation}\label{eq M add}
M(a+b)\le M(2\max\{a,b\}) \le 2^p M(\max\{a,b\}) \le 2^p(M(a) + M(b)),
\end{equation}
using monotonicity and the polynomial bound.

\begin{definition}[$A_k$, $A^+$]
Let
\[
A_k \equiv \argmin_{\rank Y = k}\norm{Y- A}_v = \argmin_{\rank Y = k} \norm{Y^\top - A^\top}_{h}.
\]
Let $A^+$ denote the Moore-Penrose pseudo-inverse of $A$,
$A^+ = U\Sigma^{-1}V^\top$, where $A=U\Sigma V^\top$ is the thin
singular value decomposition of $A$.
\end{definition}

\begin{definition}[$\eps$-contraction, $\eps$-dilation, $\eps$-embedding]
For a matrix measure $\norm{}$, and $\cT\subset\R^{n\times d'}$,
call matrix $S$ an $\eps$-contraction
for $\cT$ with respect to $\norm{}$
if $\norm{SY}\ge (1-\eps)\norm{Y}$ for all $Y\in\cT$. 

Similarly, call $S$ an $\eps$-dilation if $\norm{SY}\le (1+\eps)\norm{Y}$ for all $Y\in\cT$.

Say that $S$ is an $\eps$-embedding for $\cT$ with respect to $\norm{}$
if $S$ is both an $\eps$-contraction and $\eps$-dilation for
$\cT$ with respect to $\norm{}$.
\end{definition}

When $\cT$ is a singleton set $\{B\}$, we will refer to $B$ instead of $\{B\}$ when using these terms.


\begin{definition}[$\rspace$, $\cspace$]
The row space $\rspace(A)$ is defined as $\rspace(A) \equiv \{x^\top A\mid x\in\R^n\}$,
and similarly the column space $\cspace(A)\equiv \{Ax\mid x\in\R^d\}$.
\end{definition}

\begin{definition}[\emph{subspace} embedding, contraction, dilation]
When
$S$ is an $\eps$-embedding for $\cspace(A)$ with respect to
$\norm{}_2$, say that $S$ is a \emph{subspace} $\eps$-embedding
for $A$; that is, $\norm{SAx} = (1\pm\eps)\norm{Ax}$ for all $x\in\R^d$.
Similarly define subspace $\eps$-contraction and $\eps$-dilation.
\end{definition}

\begin{definition}[\emph{affine} embedding, contration, dilation]
When $S$ is an $\eps$-embedding for $\{AX - B\mid X\in\R^{d\times d'}\}$ with respect
to some $\norm{}$, say that $S$ is an \emph{affine}
$\eps$-embedding for $(A,B)$ with respect to $\norm{}$.
Similarly define affine $\eps$-contraction and $\eps$-dilation.
\end{definition}

\begin{definition}[\emph{lopsided} embeddings]
When $S$ satisfies the following conditions for some constraint set $\cC$
and norm $\norm{}$ (or even, any nonnegative function),
say that $S$ is an \emph{lopsided $\eps$-embedding} for $(A,B)$ with respect to
$\cC$ and $\norm{}$:
\begin{enumerate}[i.]
\item $S$ is an affine $\eps$-contraction for $(A,B)$, and
\item $S$ is an $\epsilon$-dilation for $B^*$, where $B^*\equiv AX^*-B$, and
	$X^*=\argmin_{X\in\cC}\norm{AX-B}$.
\end{enumerate}
\end{definition}


\section{Sparse affine lopsided embeddings}\label{sec sparse_embed}

The following lemma is key to our results.

\Ken{all results in this section require $M$ only to be even, monotone, and having upper-bounded growth.
A weight vector $w$ is assumed throughout: the notation implicitly assumes weights.}

\begin{lemma}\label{lem contraction p}
Suppose $M\colon \R\mapsto \R^+$ is even, monotone and
polynomial.
Let $X^*\equiv \argmin_X \norm{AX-B}_h$,
$B^* \equiv AX^* - B$,  and  $\Delta^*\equiv\norm{B^*}_h^p$.
Let $S\in\R^{m\times n}$ be a random matrix
with the properties that:
\begin{enumerate}[i.]
\item $S$ is a subspace $\epsilon$-contraction for $A$ with respect to $\norm{}_2$;
\item for all $i \in [d']$, $S$ is a subspace $\eps^{p+1}$-contraction for $[A\ B_{*i}]$
with respect to $\norm{}_2$,
with probability at least $1-\epsilon^{p+1}$;
\item $S$ is an $\eps^{p+1}$-dilation for $B^*$ with respect to $\norm{}_h$, that is,
$\norm{SB^*}_h\le (1+\eps^{p+1})\norm{B^*}_h$.
\end{enumerate}
Fix $\delta\in (0,1)$.
With failure probability at most $\delta$,
$S$ is an affine $O(\eps)$-contraction
for $(A,B)$ with respect to $\norm{}_h$,
meaning that for all $X\in\R^{d\times d'}$,
it holds that 
\[\norm{S(AX-B)}_h \ge (1- O(\epsilon))\norm{AX-B}_h.
\]
\end{lemma}

The lemma is very general, but in fact holds for an even broader class of matrix measures,
where the Euclidean norm appearing in the definition of $\norm{}_h$ is generalized to be
an $\ell_q$ norm.

\ifCONF\begin{IEEEproof}
\else
\begin{proof}
\fi
Let $\delta_i = \norm{B^*_{*i}}_2$
and $h_i = \norm{SB^*_{*i}}_2$.
For $i\in[d']$, let $Z_i$ be an indicator random variable where
$Z_i=0$ if $S$ is an $\epsilon^{p+1}$-contraction for $B_{*i}$,
and $Z_i=1$ otherwise. If $Z_i=1$ call $i$ \emph{bad}, otherwise $i$ is \emph{good}.

Consider arbitrary $X\in\R^{d\times d'}$.

Say a bad $i$ is \emph{large} if
$\norm{(AX-B)_{*i}}_2 \geq \eps^{-1}(\delta_i + h_i)$;
otherwise a bad $i$ is \emph{small}. 
Then using \eqref{eq M add} and the polynomial bounded condition on $M$, 
\begin{equation}\label{eqn:bound B*}
\sum_{\text{small\ }i} w_i M(\norm{(AX-B)_{*i}}_2)
	   \le \sum_{\text{bad\ }i}   \eps^{-p} w_iM(\delta_i + h_i) 
	 \le  \eps^{-p} 2^p \sum_{\text{bad\ }i} w_i (M(\delta_i) + M(h_i)).
\end{equation}

Using (ii), $\E[\sum_{\text{bad\ }i} w_i M(\delta_i)] \le \eps^{p+1}\Delta^*$,
so by a Markov bound
\begin{equation}\label{eq delta bound}
\sum_{\text{bad\ }i} w_i M(\delta_i) \le C\eps^{p+1} \Delta^*,
\end{equation}
for constant $C$ with failure probability at most $1/C$. Assume the event that \eqref{eq delta bound} holds.

Similarly,
\begin{align*}
\sum_{\text{bad\ } i} w_i M(h_i) 
&  = \norm{SB^*}_h^p  - \sum_{\text{good\ }i} w_i M( \norm{SB^*_{*i}}_2) \\
 & \le  (1+\eps^{p+1}) \Delta^* -  (1-\eps^{p+1})^p \sum_{\text{good\ }i} w_i M(\norm{B^*_{*i}}_2) & \text{by (iii), (ii), polynomial bounded condition} \\
& \le (1+\eps^{p+1}) \Delta^* - (1-\eps^{p+1})^p(1-C\eps^{p+1}) \Delta^* & \text{by \eqref{eq delta bound}}\\
& = O(\eps^{p+1} \Delta^*)
\end{align*}

Returning to (\ref{eqn:bound B*}), we have 
\begin{align}\label{eqn:bound2 small i}
\sum_{\text{small\ }i} w_i M(\norm{(AX-B)_{*i}}_2)
	   & \le \eps^{-p}2^p\left[
			\sum_{\text{bad\ } i} w_i M(\delta_i) + \sum_{\text{bad\ } i} w_i M(h_i)\right] \nonumber
	\\ & \le \eps^{-p}2^p (C\eps^{p+1}\Delta^* + O(\eps^{p+1}\Delta^*))\nonumber
	\\ & = O(\eps \Delta^*).
\end{align}

For arbitrary $X$ we have
\begin{align}\label{eq bound on bad}
\sum_{\text{bad\ } i} w_i M(\norm{S(AX-B)_{*i}}_2) \nonumber
	   & \ge \sum_{\text{large\ } i} w_i M(\norm{S(AX-B)_{*i}}_2) \nonumber
	\\ & \ge \sum_{\text{large\ } i} w_i M(\|S(AX^*-AX)_{*i}\|_2 - h_i) \nonumber
	\\ & \ge \sum_{\text{large\ } i}  w_i M((1-\eps)\|(AX^*-AX)_{*i}\|_2 - h_i) \nonumber
	\\ & \ge \sum_{\text{large\ } i} w_i  M((1-\eps)\|(AX-B)_{*i}\|_2 - \delta_i - h_i) \nonumber
	\\ & \ge (1-O(\eps))\sum_{\text{large\ } i} w_i M(\norm{(AX-B)_{*i}}_2),
\end{align}
where  the first inequality uses that all large $i$ are bad by definition,
the second inequality
is the triangle inequality, the third inequality is that $S$ is a subspace $\eps$-contraction
for $A$,  the fourth inequality is the triangle inequality, and the last inequality uses
the definition of large and the polynomial growth bound for $M()$.

It follows that
\begin{align*}
\|S(AX-B)\|_h^p
& =  \sum_{\text{good\ }i} w_i M(\norm{S(AX-B)_{*i}}_2)
	+ \sum_{\text{bad\ } i} w_i M(\norm{S(AX-B)_{*i}}_2)\\
& \ge (1-\eps)^p \sum_{\text{good\ }i} w_i M(\norm{(AX-B)_{*i}}_2)
	+ \sum_{\text{bad\ }i} w_i M(\norm{S(AX-B)_{*i}}_2) \\
& \ge  (1-\eps)^p \sum_{\text{good\ }i} w_i M(\norm{(AX-B)_{*i}}_2)
	+ (1-O(\eps))\sum_{\text{large\ } i} w_i M(\norm{(AX-B)_{*i}}_2)\\
& \ge (1-O(\eps))\norm{AX-B}_h^p - O(\eps\Delta^*) \\
& \ge (1-O(\eps))\norm{AX-B}_h^p
\end{align*}
where the first inequality uses that $S$ is a subspace embedding $[A\ B_{*i}]$ for good $i$, 
the second inequality uses \eqref{eq bound on bad},
and third inequality uses (\ref{eqn:bound2 small i}), and the last
uses that $\Delta^*\le \norm{AX-B}_h^p$ by definition.
The lemma follows.
\ifCONF
\end{IEEEproof}
\else
\end{proof}
\fi

\begin{lemma}\label{lem hold}
For random variable $X$, $p\ge 1$, and $\alpha>0$,
$\E[|X-1|^p] \le \alpha^p$ implies
$\E[|\max\{|X|^p,1\}] \le (1+\alpha)^p$.
\end{lemma}

\ifCONF\else
\begin{proof}
We have
\[
\E\max\{[|X|^p,1\}] 
	\le \E[(1 + |X-1|)^p]
	= \E[\sum_i \binom{p}{i} |X-1|^i]
	\le \sum_i \binom{p}{i} (\alpha^p)^{i/p}
	= (1+\alpha)^p,
\]
where the second inequality follows from H\"older's inequality.
\end{proof}
\fi 

\begin{lemma}\label{lem lopsided sparse}
If $S\in R^{m\times d}$ is a sparse embedding matrix\cite{bn13,CW13,mm13,nn13}, then there is
$m=O(d^2/\eps^{O(p)})=\poly(d/\eps)$ such that $S$ is a lopsided $\eps$-embedding
for $(A,B)$ with constant probability for $\norm{}_h$,
for $M$ even, monotone, and polynomial.
The product $SA$ is computable in $O(s\nnz(A))$ time,
where $s=O(p^3/\eps)$. A value of $s=O(p^3)$ can be used if $m$ is increased
by an additive $\eps^{-O(p^2)}$.
\end{lemma}

\ifCONF\else
\begin{proof}
We show that the conditions of Lemma~\ref{lem contraction p} hold
for the OSNAP construction of \cite{nn13}, with sparsity parameter $s$.

\Ken{OK now?}
A sparse embedding matrix $S$ of either given dimensions satisfies (i) and (ii)
of Lemma~\ref{lem contraction p} by Theorem~3 of \cite{nn13},
taking $\delta$ and $\eps$ of that lemma to be $\eps^{p+1}$.
That implies that $m=O(d^2/\eps^{O(p)})$ suffices for $s=1$. Moreover,
increasing $s$ does not degrade the quality bounds.


Next we show that $S$ satisfies (iii) of Lemma~\ref{lem contraction p} (and so also condition
(ii) in the definition of lopsided embeddings).
Let $y\in\R^n$ be a unit vector.
From (20) of \cite{nn13}, within the proof of Theorem 9, and with $d$ of that theorem equal to 1, we have for even $\ell\ge 1$,
\[
\E[(\norm{Sy}^2 - 1)^\ell] \le e\ell^{4.5}\max_{2\le b\le \ell} (b^3/s)^{\ell-b} (b^4/e)^b m^{-b/2}.
\]
If $s\ge e^2\ell^3/\eps^{p+1}$ and $m\ge e^3\ell^8/\eps^{2(p+1)}$,
then the upper bound is
\[
\eps^{\ell(p+1)} e\ell^{4.5} \max_{2\le b\le\ell} (b/\ell)^{3\ell+b} \exp(-2\ell-b/2),
\]
and this is at most $\eps^{\ell(p+1)}$, since $b/\ell\le 1$ and
$e\ell^{4.5}\exp(-2\ell-b/2) \le \ell^{4.5}\exp(-2\ell) < 1$ for $b\ge 2$
and $\ell\ge 1$.
Similarly, $s\ge e^2\ell^3$ and $m\ge e^3\ell^8/\eps^{\ell (p+1)}$
yield the same bound: the part depending on $\eps$ is shifted to $m$.
In either case,
$\E[(\norm{Sy}^2 - 1)^\ell] \le \eps^{\ell(p+1)}$.
Using Lemma~\ref{lem hold} with even $\ell=p/2$ and unit vector $y$,
$\E[\max\{\norm{Sy}^p,1\}] \le (1+\eps^{p+1})^{p/2}\le 1+p\eps^{p+1}$,
and so for $y$ not necessarily
unit, and letting $a_+\equiv \max\{a,0\}$ for $a\in\R$,
\[
\E[(\norm{Sy}^p/\norm{y}^p - 1)_+ ] \le p \eps^{p+1}.
\]
(If $p$ is not divisible by $4$, we apply the lemma to $p'\le p+3$,
and pay a constant factor in the sizes of $s$ and $m$.)
Applying the bound to each column of $B^*$, 
\begin{align*}
\E[(\norm{SB^*}_{h}^p - \norm{B^*}_h^p)_+]
	   & \le \sum_i w_i \E [ (M(\norm{SB^*_{*i}}) - M(\norm{B^*_{*i}}))_+]
	\\ & \le \sum_i w_i M(\norm{B^*_{*i}})
			\E[(\frac{\norm{SB^*_{*i}}^p}{\norm{B^*_{*i}}^p} - 1)_+]  & \mathrm{using\ poly\ growth}
	\\ & \le \sum_i w_i M(\norm{B^*_{*i}}) p\eps^{p+1}
	\\ & = \norm{B^*}_p^p p\eps^{p+1}.
\end{align*}
From Markov's inequality, we have
$(\norm{SB^*}_{h}^p - \norm{B^*}_{h}^p)_+ \le 10p \eps^{p+1} \norm{B^*}_{h}^p$ with 
failure probability $1/10$. Applying this to $\eps' = (10p)^{-1/(p+1)}\eps$,
condition (iii)
of Lemma~\ref{lem contraction p} holds, and so condition (ii) defining
lopsided embeddings.
Thus the claim holds for sparse embedding $S$ of the given size and sparsity,
and the lemma follows.
\end{proof}
\fi 

\Ken{
Note that the bound on $m$ needed for (i) and (ii), with a factor of $d^2$, implies
that the bound on $\norm{Sy}$ in the proof for (iii) may be quite conservative, since
the factor of $d^2$ implies a corresponding reduction in the bound for
$\E[(\norm{Sy}^2-1)^\ell]$.
}

\begin{lemma}\label{lem lopsided}
For a nonnegative matrix function $\norm{}$,
suppose that $S$
is a lopsided $\eps$-embedding for $(A,B)$.
Then if $\tilde X\in\R^{d\times d'}$ has
$
\norm{S(A\tilde X - B)} \le \kappa \min_{X\in\cC}\norm{S(AX-B)}$
for some $\kappa$, then
$\norm{A\tilde X-B}
	\le \kappa (1+3\eps)\Delta^*.$
\end{lemma}

\ifCONF\else
\begin{proof}
Using the hypotheses,
\begin{align*}
\norm{A\tilde X - B}
	   & \le \norm{S(A\tilde X - B)}/(1-\eps) & \text{\ by\ (i) of lopsided}
	\\ & \le \kappa\norm{S\hat B}/(1-\eps) &\text{\ def\ of\ }\tilde X
	\\ & \le \kappa (1+\eps)\norm{B^*}/(1-\eps) & \text{\ by (ii) of lopsided}
	\\ & \le \kappa(1+3\eps)\norm{AX^*-B}.
\end{align*}

The lemma follows.
\end{proof}
\fi 

\begin{lemma}\label{thm good colspace}
If $R\in\R^{d\times m}$ has that
$R^\top$ is a lopsided $\eps$-embedding for $(A_k^\top, A^\top)$ with respect
to $\norm{}_h$, 
then
\begin{equation}\label{eq good colspace}
\min_{\rank X = k} \norm{ARX-A}_v^p \le (1+3\eps)\norm{A_k- A}_v^p
\end{equation}
for $X$ of the appropriate dimensions. 
\end{lemma}

\ifCONF\else
\begin{proof}
Apply Lemma~\ref{lem lopsided},
for $A_k^\top\in\R^{d\times n}$ taking the role
of $A$ in the lemma,
$A^\top$ the role of $B$,
$R^\top\in\R^{m\times d}$ the role of $S$,
$\cC=\R^{n\times n}$ that of $\cC$,
$I$ the role of $X^*=\argmin_X\norm{A_k^\top X -A^\top}_{h}$,
and $A_k^\top - A^\top$ the role of $B^*$.

Lemma~\ref{lem lopsided} implies that for
$\tilde Y \equiv \argmin_Y\norm{R^\top (A_k^\top Y - A^\top)}_{h}$, we have
\[
\norm{A_k^\top\tilde Y - A^\top}_{h}^p
	\le (1+3\eps)\norm{A_k^\top - A^\top}_{h}^p
\]
Noting that here $\tilde Y = (R^\top A_k^\top)^+ R^\top A^\top$, by
taking the transpose we have
\[
\norm{AR((R^\top A_k^\top)^+)^\top A_k - A}_v^p
	\le (1+3\eps)\norm{A_k - A}_v^p
\]
and since by definition $\min_{\rank X = k} \norm{ARX-A}_v^p$ is no
more than the left hand side of this
inequality,  the lemma follows.
\end{proof}
\fi 

\begin{theorem}\label{thm good colspace sparse}
If $R\in\R^{d\times m}$ is a sparse embedding matrix
with sparsity parameter $s$,
there is $s=O(p^3/\eps)$ and $m=O(k^2/\eps^{O(p)})=\poly(k/\eps)$ such that
with constant probability,
\begin{equation}\label{eq good colspace sparse}
\min_{\rank X = k} \norm{ARX-A}_v^p \le (1+3\eps)\norm{A_k- A}_v^p
\end{equation}
for $X$ of the appropriate dimensions. 
\end{theorem}

\ifCONF\else
\begin{proof}
Note that $R^\top$ is a lopsided embedding for $(A_k^\top, A^\top)$ if and only if
it is a lopsided embedding for $(V_k, A^\top)$, where $V_k$ comprises
a basis for the columnspace of $A_k^\top$.
Lemma~\ref{lem lopsided sparse} implies that with the given bound
on $m$, $R^\top$ is a lopsided embedding for $(V_k, A^\top$),
and so for $(A_k^\top, A^\top)$. This condition and Lemma~\ref{thm good colspace}
implies the lemma.
\end{proof}
\fi


\section{Sampling matrices for low-rank approximation}\label{sec sampling}

\subsection{Nets, Bounds, Approximations for Scale-Insensitive Measures}\label{subsec scale insens}

As noted in the introduction, most of the proposed $M$-estimators yield
measures on matrices that are ``almost'' norms; the main property they lack
is scale invariance. However, most proposed $M$-estimators do satisfy
the weaker ``scale insensitivity'' of \eqref{eq scale insens}.
In this subsection, we give some lemmas regarding such scale-insensitive
almost-norms, that are weaker versions of properties held by norms.

In this subsection only, $\norm{}$ denotes a measure on a $d$-dimensional
vector space $\cal V$ such that $\norm{0} = 0$, $\norm{x}\ge 0$ for all $x\in\cal V$,
$\norm{x+y}\le \norm{x} + \norm{y}$, and $\norm{}$ satisfies
\begin{equation}\label{eq scale insens gen}
(C_M\kappa)^{1/p} \norm{x} \le \norm{\kappa x} \le \kappa \norm{x},
\end{equation}
for all $\kappa\ge 1$. This implies a continuity condition,
that for any $x\in\cV$ with $\norm{x}\ne 0$ and $\rho > 0$, there is $\beta>0$
so that $\norm{\beta x} = \rho$.

Also $\norm{}_S$ denotes a measure on $\cV$
satisfying the same conditions.

Let $\Vol$ be a \emph{nonnegative measure} on $\cV$, so that if $W,Z\subset \cV$ are disjoint, then
$\Vol(W\cup Z) = \Vol(W)+\Vol(Z)$, and if $W\subset Z$, then $\Vol(W)\le \Vol(Z)$, and
for $\alpha>0$, $\Vol(\alpha W) = \alpha^d\Vol(W)$.
	
Let $\cB_\rho$ denote the ball $\{x\in\cV\mid \norm{x}\le \rho\}$,
and let $\cS_\rho$ denote the sphere $\{x \in \cV\mid \norm{x}=\rho\}$.
Note that for any $y\in\cV$, the ball
$\{x\in\cV\mid \norm{x-y}\le \rho\} = \cB_\rho+y$,
and similarly for spheres.

An $\eps$-cover of $\cC\subset\cV$ is a collection $\cN\subset \cV$
such that for all $y\in\cV$ there is some $x\in\cN$ with $\norm{x-y}\le \eps$.

\begin{lemma}\label{lem v-eps-net gen}
Let $\cC\subset\cV$. For given $\eps>0$,
$\cB_\rho\cap\cC$ has an $\eps \rho$-cover in $\cC$ of size $(4^p/C_M\eps^p)^d$.
\end{lemma}

\ifCONF\else
\begin{proof}
First, assume that $\cC=\cV$.
If  $x\in \cB_\rho$ then
$\norm{\alpha x} \le \eps \rho$, where $\alpha\equiv C_M \eps^p$,
since from \eqref{eq scale insens gen} with $\alpha = 1/\kappa$,
\[
\norm{\alpha x}^p
	\le (\alpha/C_M) \norm{x}^p
	\le \eps^p \rho^p.
\]
Thus $\alpha \cB_\rho\subset \cB_{\eps \rho}$.

We have $\Vol(\cB_{\eps \rho})\ge \Vol(\alpha \cB_\rho) = \alpha^{d}\Vol(\cB_\rho)$.
Thus at most $\alpha^{-d}$ translations of $\cB_{\eps \rho}$ can be packed
into $\cB_\rho$ without overlapping. Let $\cN'$ be the collection of centers
of such a maximal packing. Then
every point of $\cB_\rho$ must be within $2\eps \rho$ of a point of $\cN'$,
since otherwise another translation of $\cB_{\eps \rho}$ would fit.
There is therefore  a $2 \eps \rho$-cover of $\cB_\rho$ of size at most
$(C_M \eps^p)^{-d}$, and so an $\eps \rho$-cover $\cN$ of size at
most $(2^p/C_M \eps^p)^{d}$.

Now to consider $\cC\subset\cV$. For each $x\in\cN$, let $\cC_x$ denote
the subset of vectors in $\cC\cap \cB_\rho$ with $x$ closest. Pick an arbitrary $x'\in\cC_x$,
so that all members of $\cC_x$ have $x'$ within $2\eps \rho$. The resulting collection
$\cN'$ is a $2\eps \rho$-cover of $\cC\cap \cB_\rho$.
Therefore there is an $\eps \rho$-cover of $\cC\cap \cB_\rho$
comprising members of $\cC$, of size at most $(4^p/C_M\eps^p)^{d}$.
The lemma follows.
\end{proof}
\fi 

\begin{lemma}\label{lem v-dilation gen}
\begin{enumerate}[i.]
\item
If $\norm{x}_S/\rho \le 1+\eps$
for all $x$ in a $C_M\eps^p \rho$-cover $\cN$ of $\cS_\rho$, then
$\norm{x}_S/\rho \le 1+3\eps$ for all $x\in\cS_\rho$.

\item If for $\gamma\ge\eps$,
\[
1-\eps \le \norm{x}_S/\rho \le 1+\gamma,
\]
for all $x\in\cN$ of $\cS_\rho$,
then
\[
\norm{x}_S/\rho \ge 1 - \eps(2+3\gamma)
\]
for all $x\in\cS_\rho$.
\end{enumerate}
\end{lemma}

\ifCONF\else
\begin{proof}
We adapt an argument from Lemma 9.2 of
\cite{Ball}.
Let $\cN$ be a $C_M\eps^p \rho$-cover of $\cS_\rho$ for which
$\norm{}_S$ satisfies the dilation condition of (i) in the lemma.
Let $\eta\equiv \sup_{x\in \cS_\rho} \norm{x}_S/\rho$,
realized by $x_s$. Then for $x_\delta$ where
$x_\delta\equiv x_s - x'$ with $x'\in\cN$
such that $\norm{x_\delta}\le C_M\eps^p \rho$,
pick $\kappa$ such that $\kappa x_\delta \in \cS_\rho$.
This implies $\kappa \ge \rho/\norm{x_\delta}\ge 1/C_M\eps^p$,
using \eqref{eq scale insens gen}.

We have
\[
\norm{x_\delta}_S
	\le (C_M\kappa)^{-1/p} \norm{\kappa x_\delta}_S
	\le \frac{(C_M\eps^p)^{1/p} }{C_M^{1/p}} \eta \rho
	\le  \eps\eta \rho,
\]
so that
\[
\eta \rho = \norm{x_s}_S
	\le \norm{x'}_S + \norm{x_s - x'}_S
	\le \rho (1+ \eps) + \eps\eta \rho
\]
and so $\eta \le (1+\eps)/(1 - \eps ) \le 1+ 3\eps$, implying
$\norm{x}_S/\rho \le 1+3\eps$ for all $x\in \cS_\rho$.
Claim (i)
follows.

For claim (ii), the conditions readily imply, similarly to claim (i),
that $\norm{x}_S \le (1+3\gamma) \rho$
for all $x\in\cS_\rho$. For given $x\in\cS_\rho$, pick $x'\in\cN$ such that for $x_\delta\equiv x - x'$,
$\norm{x_\delta}\le C_M \eps^p \rho $. Then similarly to the argument in claim (i),
$\norm{x_\delta}_S/\rho \le \eps(1+3\gamma)$, and so
\[
\norm{x}_S \ge \norm{x'}_S - \norm{x_\delta}_S
	\ge \rho(1-\eps) - \eps(1+3\gamma)\rho
	= \rho(1 -\ \eps(2+3\gamma)),
\]
and claim (ii), and the lemma, follow.
\end{proof}
\fi 

\begin{lemma}\label{lem weak embed gen}
\begin{enumerate}[i.]
\item
If for some $\eta > 0$, $\norm{x}_S\le \rho \eta$
for all $x\in\cS_\rho$,
then for all $x\in\cB_\rho$ it holds that
$\norm{x}_S \le \rho \eta/C_M^{1/p}$.
\item
If for some $\eta > 0$, $\norm{x}_S \ge \rho \eta$
for all $x\in \cS_\rho$,
then for all $x\notin\cB_\rho$ it holds that
$\norm{x}_S \ge \rho \eta C_M^{1/p}$.
\end{enumerate}
\end{lemma}

\ifCONF\else
\begin{proof}
For (i), there is some $\kappa \ge 1$ so that $\norm{\kappa x} = \rho$.
We have by hypothesis and from \eqref{eq scale insens gen} that
\[
\norm{x}_S
	\le (C_M\kappa)^{-1/p}\norm{\kappa x}_S
	\le {C_M^{-1/p}} \rho \eta,
\]
and claim (i) follows.

For (ii), there is some $\alpha \le 1$ so that $\norm{\alpha x}= \rho$.
We have by hypothesis and from \eqref{eq scale insens gen}
\[
\norm{x}_S
	\ge (C_M/\alpha)^{1/p} \norm{\alpha x}_S
	\ge C_M^{1/p} \eta \rho,
\]
and claim (ii) follows.
\end{proof}
\fi 

\subsection{Sampling Matrices}

We discussed sampling matrices in \S\ref{subsubsec point red 1}, and their
construction
via well-conditioned bases (Definition~\ref{def:wcb}), using
the fast construction of a change-of-basis matrix $R$ (Theorem~\ref{thm well cond}).

\begin{definition} $T_S$, $\norm{SA}_v$, $\norme{SA}_v$.
Let $T_S\subset [n]$ denote the indices $i$ such that $e_i$ is chosen for $S$.
Using a probability vector $q$ and sampling matrix $S$ from $q$, we will
estimate $\norm{A}_v$ using $S$ and a re-weighted version, $\norm{S\cdot}_{v, w'}$ of $\norm{\cdot}_v$,
with
\[
\norm{SA}_{v,w'} \equiv \left[\sum_{i\in T_S} w'_i M(\norm{A_{i*}})\right]^{1/p},
\]
where $w'_i \equiv w_i/q_i$.
Since $w'$ is generally understood, we will usually just write $\norm{SA}_v$.
We will also need an ``entrywise row-weighted'' version:
\[
\norme{SA}
	\equiv \left[\sum_{i \in T_S} \frac{w_i}{q_i} \norm{A_{i*}}_M^p\right]^{1/p}
	= \left[ \sum_{\substack{i\in T_S\\j\in[d]}} \frac{w_i}{q_i} M(A_{ij})\right]^{1/p}.
\]

\end{definition}

We have $\E_S[\norm{SA}_v^p] = \norm{A}_v^p$.

When $M$ is scale-invariant, we can scale the rows of $S$ by $w'$,
and assume that $w'$ is the vector of all ones.


\begin{lemma}\label{lem eM M}
Let $x\in\R^d$ and $M$ even, monotone, and polynomial.
For weights $w=\bone^d$ we have
\begin{equation}\label{eq M bound}
\frac{1}{d}\norm{x}_M^p \le M(\norm{x}_p) \le \norm{x}_M^p.
\end{equation}
We now let $w$ be general (but as always $w\ge\bone_d$).
For $p\le 2$ we have
\[
\frac1{d^{1/p}}\norme{A} \le \norm{A}_v \le \norme{A},
\]
while for $M(x)=|x|^p$ with $p\ge 2$, we have
\[
\norme{A} \le \norm{A}_v \le d^{1/2-1/p}\norme{A},
\]
and for $M(x)=|x|^p$ with $p\le 2$, we have
\[
d^{1/2-1/p} \norme{A} \le \norm{A}_v \le \norme{A}.
\]
\end{lemma}

\ifCONF\else
\begin{proof}
For the first inequality, from $\norm{x}_p\ge \norm{x}_\infty$
we have by monotonicity
\[
M(\norm{x}_p)\ge M(\norm{x}_\infty)\ge \frac1d\sum_i M(x_i) = \frac1d\norm{x}_M^p,
\]
and for the second inequality,
\begin{equation*}
M(\norm{x}_p)
	\le \frac{\norm{x}_p^p}{\norm{x}_\infty^p} M(\norm{x}_\infty)
	=   \sum_i \frac{|x_i|^p}{\norm{x}_\infty^p}M(\norm{x}_\infty)
	\le \sum_i M(x_i)
	= \norm{x}_{M}^p.
\end{equation*}
The claim regarding matrix norms follows from \eqref{eq M bound}, the definitions,
and standard results.
\end{proof}
\fi 


\begin{lemma}\label{lem M lev}
Consider norms under $w=\bone^n$.
For $M$-estimators with $M()$ even, polynomial, and linearly bounded below,
\[
\sup_{y\in\R^d} \frac{M(A_{i*}y)}{\norm{Ay}_M^p}
	\le \gamma_i(A,M) \equiv\max\{\beta\norm{U_{i*}}_p/C_M,\beta^p\norm{U_{i*}}_p^p\},
\]
where $U$ is an $(\alpha,\beta,p)$-well-conditioned basis of $A$
(see Definition \ref{def:wcb}).
For estimators in $\cL_p$, with $M(x)=|x|^p$ with $p\ge 1$,
$\gamma_i(A, M)$ can be sharpened to $\beta^p\norm{U_{i*}}_p^p$, and
for $\gamma(A,M)\equiv \sum_i \gamma_i(A,M)$ we have
$\gamma(A,M) = O((\alpha\beta)^p) = O(d^{\max\{p,1+p/2\}})$.
For $M\in \cM_2$, a general nice $M$-estimator with $p\le 2$,
$U$ can be an orthogonal basis of $A$, and
$\gamma(A,M) \equiv \sum_i \gamma_i(A,M) = O(\sqrt{{d} n})/C_M$.
\end{lemma}

We will use $\gamma_i(A)$ and $\gamma(A)$ to mean the appropriate bounds for the
particular $M$-estimator under consideration, and write $\gamma_i(A,\cM_2)$,
$\gamma_i(A,\cL_p)$, and so on, for those classes of estimators.

\ifCONF
\begin{IEEEproof}
\else
\begin{proof}
\fi
Let $A=UR$. Let $p' \equiv 1/(1-1/p)$, with $p'\equiv\infty$ when $p=1$.
We have 
\begin{align*}
M(A_{i*}y)
	   & = M(U_{i*}Ry)
	\\ & \le M(\norm{U_{i*}}_p \norm{Ry}_{p'}) &\text{by H\"older's inequality}
	\\ & \le M(\norm{U_{i*}}_p \beta \norm{URy}_{p})& \text{$U$ is well-cond.}
	\\ & \le \max\{\beta\norm{U_{i*}}_p/C_M,\beta^p\norm{U_{i*}}_p^p\} M(\norm{Ay}_p) & \text{growth bounds for $M$}
	\\ & \le \max\{\beta\norm{U_{i*}}_p/C_M, \beta^p\norm{U_{i*}}_p^p\} \norm{Ay}_M^p. & \text{by \eqref{eq M bound}}
\end{align*} 
The first claim of the lemma follows. The claim for $\cL$ follows by noting that stronger bound possible
in the next-to-last inequality.

The bound for $\gamma(A,\cL) \le (\alpha\beta)^p/C_M$ follows using the definition of an
$(\alpha,\beta,p)$-well-conditioned basis
and constructions of such bases (Theorem 5, \cite{DDHKM09}). The bound for
$\gamma(A,\cM_2)$ follows by using an orthogonal basis, 
and the bound on $\gamma_i(A,\cM_2)$
applying in such a case: the worst case has each $\norm{U_{i*}}_2$ equal
to $\sqrt{d/n}$.
\ifCONF
\end{IEEEproof}
\else
\end{proof}
\fi

\begin{lemma}\label{lem M lev w}
Let $w\in \R^n$ with $w\ge \bone^n$. Let $N\equiv \lceil \log_2(1+\norm{w}_\infty)\rceil$.
For $j\in[N]$, let $T_j\equiv \{i\in [n] \mid 2^{j-1} \le w_i < 2^j\}$,
and let $U^j$ be an $(\alpha,\beta,p)$-well-conditioned basis for $A_{T_j *}$,
the matrix comprising the rows $A_{i*}$ with $i\in T_j$.
For $M$-estimators with $M()$ even, polynomial, and linearly bounded below,
\[
\sup_{y\in\R^d} \frac{w_i M(A_{i*}y)}{\norm{Ay}_M^p}
	\le \gamma_i(A,M,w)\equiv 2\max\{\beta\norm{U_{i*}^j}_p/C_M,\beta^p\norm{U_{i*}^j}_p^p\}.
\]
The values $\gamma(A,M,w)\equiv \sum_i \gamma_i(A,M,w)$ are as in Lemma~\ref{lem M lev},
but multiplied by $2N$ if $w\ne \bone^n$.
\end{lemma}

\ifCONF\else
\begin{proof}
For given $i$ with $i\in T_j$,
apply Lemma~\ref{lem M lev} to $A_{T_j*}$ with unit weights.
For the given weights and some $y\in\R^d$,
$\norm{Ay}_{M,w} \ge 2^{j-1}\norm{A_{T_j*}y}_M$, where $w_i \le 2^j$ for all $i\in T_j$.
Thus up to using $U^j$, and
a factor of 2, the same bounds hold as in Lemma~\ref{lem M lev}. The bounds for $\gamma(A,M,w)$ are those
for $\gamma_i(A,M)$, multiplied
by that factor of $2$, and by the $N$ upper bound on the number of nonempty $T_j$.
\end{proof}
\fi 

We will need the following lemma and theorem for fast row norm estimation.

\begin{lemma}\label{lem:kappa}
Let $\kappa \in (0,1)$ be an arbitrary constant. 
Let $G$ be a $d \times t$ matrix of i.i.d. normal random variables with mean $0$ and variance $1/t$,
for some $t$ that is a constant multiple of $1/\kappa$.  
For each $i \in [n]$, let $g_i = \|(AG)_i\|_2^2$. 
Then with failure probability $1/n$,
simultaneously for all $i \in [n]$, $g_i \geq \|A_i\|_2^2/n^{\kappa}$.
\end{lemma}

\ifCONF\else
\begin{proof}
Each entry of $(AG)_i$ is an $N(0, \|A_i\|_2^2/t)$ random variable, and so with probability $1-O(n^{-\kappa})$, 
the entry has value at least $\|A_i\|_2/n^{\kappa/2}$, using that $t$ is a constant. Hence, with this probability,
its square has value at least
$\|A_i\|_2^2/n^{\kappa}$. Since the entries in a row of $AG$ are independent, the probability that all squared
entries are less than $\|A_i\|_2^2/n^{\kappa}$ is less than $1/n^2$, since $G$ has $O(1/\kappa)$ columns.
The lemma follows by a union bound. 
\end{proof}
\fi 

\begin{theorem}\label{thm G est}
Let $t_M\equiv 1$ for $M\in\cL_p$, and $t_M$ be a large enough constant, for $M\in\cM_2$.
Fix integers $r_1,m\ge 1$.
For matrix $U\in\R^{n\times d}$, suppose a sampling matrix $S$ using probabilities
$z_i\equiv\min \{1, r_1 z'_i/\sum_i z'_i\}$, where $z'_i=\norm{U_{i*}}_p^m$,
has small constant failure probability,
for some success criterion. (Here we require that oversampling does not harm success.)
Let $G\in\R^{d\times t_M}$ be a random matrix with independent Gaussian entries
with mean 0 and variance $1/t_M$.
Then for $M\in\cL_p$,
a sampling matrix chosen with probabilities
\[
q_i\equiv\min \{1, K_2 d^{m/2} r_1^{m+1} q'_i/\sum_i q'_i\},
\]
where $q'_i \equiv |U_{i*}G|^m$,
also succeeds with small constant failure probability.
For $M\in\cM_2$, for which $p=2$ and $m=1$,
replace $d^{m/2}r_1^{m+1}$ in the expression for $q_i$
with $r_1n^{O(1/t_M)}\log n$, again with small constant failure probability.
\end{theorem}

\ifCONF\else
\begin{proof}
We will show that sampling with $q$ gives performance comparable to sampling with~$z$.

First consider $t_M=1$ for which $G$ is a $d$-vector $g$.

The quantity $ |U_{i*}g|$ is a half-normal random variable; its mean is
proportional to $\norm{U_{i*}}_2$, which for half-normal distributions
implies that its $m$'th moment is 
proportional to $\norm{U_{i*}}_2^m\ge [d^{-1/2}\norm{U_{i*}}_p]^m$,
so that $\E[q'_i] \ge d^{-m/2} z'_i$. 

By a Markov bound, with failure probability
$1/10$, $\sum_i q'_i \le 10 \E_g[\sum_i q'_i] \le C_1 \sum_i z'_i$,
for a constant $C_1$.  We condition on this event $\cE_Q$.

For value
$s>0$, say that index $i$ is \emph{good} if $q'_i\ge d^{-m/2} z'_i/s^m$.
By standard properties of the Gaussian distribution, there is
an absolute constant $C$ so that
the probability that $i$ is not good is at most $C/s$.
\Ken{should've said what these properties are...}

We have $\sum_i z_i \prob_g\{i\mathrm{\ not\ good}\}\le Cr_1/s$,
and conditioning on $\cE_Q$ increases
this bound by at most $1/(1-1/10)=1.1$.
Let $\cE_f$ be the event that this sum is at most $11Cr_1/s$;
then $\cE_f$, given $\cE_Q$, holds with failure probability $1/10$.

Now condition on both events $\cE_f$ and $\cE_Q$.

Let $s=C_2r_1$, with $C_2\equiv 110C$,
so that an algorithm based on sampling with
$z$ will, with probability at least $9/10$,
choose only good indices. That is,
a $z$-sampling algorithm that is restricted to non-failed indices
will have failure probability at most 1/10 more than
one that is not.

Now consider sampling with $q$, but restricting
the algorithm to good indices. A given
good index $i$ is chosen either with probability $1\ge z_i$,
or with probability
\[
K_2 r_1^{m+1} d^{m/2} q'_i/\sum_i q'_i
	\ge K_2(s/C_2)^m r_1 (z'_i/s^m)/(C_1\cE_g[\sum_i q'_i])
	\ge r_1z'_i/\sum_i z'_i \geq z_i,
\]
for large enough constant $K_2$, so that $q_i \ge z_i$
for a good index $i$. We have that a $q$-sampling algorithm
picks indices with probabilities at least as large
as an algorithm that has failure probability at most
3/10 more than an unrestricted $z$-sampling algorithm.
Since picking additional not-good indices does not hurt the performance
guarantee, the claim for $r_1^{m+1}$ follows.

For $M\in\cM_2$, we invoke Lemma~\ref{lem:kappa},
so that with failure probability $1/n$,
$\norm{U_{i*}G}_2 \ge n^{-\kappa} \norm{U_{i*}}_2$.
With failure probability $1/n$, $\sum_i q'_i$ is 
bounded above by $(\log n)\norm{U}_v$, and so oversampling as
described will have all rows chosen with probability at least $z_i$.
Adjusting constants, the theorem follows.
\end{proof}
\fi 

\begin{lemma}\label{lem S bb}
Let $\rho>0$ and integer $z>0$.
For sampling matrix $S$,
suppose for given $y\in\R^d$ with failure probability $\delta$ it holds
that $\norm{SAy}_M = (1\pm 1/10) \norm{Ay}_M$.
There is $K_1 = O(z^2/C_M)$ so that
with failure probability $\delta (K_\cN/C_M)^{(1+p)d}$,
for a constant $K_\cN$,
any rank-$z$ matrix $X\in\R^{d\times d}$
has the 
property that if $\norm{AX}_v \ge  K_1\rho$, then $\norm{SAX}_v \ge \rho$,
and that if $\norm{AX}_v \le \rho/K_1$, then $\norm{SAX}_v\le \rho$.
\end{lemma}

\ifCONF\else
\begin{proof}
Suppose $X$ has the SVD $X= U\Sigma V^\top =W V^\top$,
where $W= U \Sigma\in\R^{d\times z}$ and $V\in\R^{d\times z}$
has orthogonal columns. 
Since $\norm{AX}_v = \norm{AW}_v$, and similarly for $\norm{SAX}_v$,
it is enough to assume that $\norm{AW}_v \ge K_1\rho$,
and show that $\norm{SAW}_v \ge 10\rho$ follows.

From Lemma~\ref{lem eM M}, when $p\le 2$,
$\norme{AW} \ge \norm{AW}_v$,
so there is a column $y$ of $W$ such that $\norm{Ay}_M\ge K_1\rho/\sqrt{z}$.
The same bound follows by similar reasoning for $p\ge 2$.

We will show that if $\norm{Aw}_M$ is large, then $\norm{SAw}_M$
is large, for all $w\in\R^d$, by applying the results of \S\ref{subsec scale insens},
with $\cV$ of that section mapping to $\rspace(A^\top)$,
$\norm{}$ mapping to a pseudo-norm on vectors $\norm{A\cdot }_M$,
and $\norm{}_S$ similarly mapping to $\norm{SA\cdot }_M$.
(Here we use the fact that $\rspace(A^\top)$ is the orthogonal
complement of the nullspace of $A$, so that $\norm{A\cdot}_M$ is a norm within $\rspace(A^\top)$:
our claims are such that if they hold all members of $\rspace(A^\top)$, then they hold for all $w\in\R^d$.)

Let $\cS_\gamma$ denote the sphere in $\rspace(A^\top)$ with respect to $\norm{A\cdot}_M$
of radius $\gamma\equiv K_1 \rho/z$. 

Let $\cN$ be a $C_M\eps_s^2\gamma$-cover of $\cS_\gamma$,
for $\eps_s=1/10$. From Lemma~\ref{lem v-eps-net gen},
$|\cN| \le (K_\cN/C_M)^{(1+p)m}$, for a constant $K_\cN$.
With failure probability at most $\delta(K_\cN/C_M)^{(1+p)m}$,
all $w\in\cN$
satisfy the condition 
\begin{equation}\label{eq vec good}
\norm{SAw}_M = (1\pm\eps_s)\norm{Aw}_M.
\end{equation}
Assume this holds.

From (ii) of Lemma~\ref{lem v-dilation gen},
this implies $\norm{SAw}_M \ge \gamma/2$ for all $w\in\cS_\gamma$.
From (ii) of Lemma~\ref{lem weak embed gen}, we then have
$\norm{SAw}_M\ge \gamma C_M^{1/p}/2$, for all $w\in\R^m$
outside $\cS_\gamma$. Now applying this to $y$,
we have, again using Lemma~\ref{lem eM M},
\[
\norm{SAX}_v
	= \norm{SAW}_v
	\ge \norme{SAW}/z
	\ge \norm{SAy}_v/z
	\ge (K_1\rho/\sqrt{z})C_M^{1/p}/2z
	= \rho K_1 C_M^{1/p}/2z\sqrt{z}.
\]
So if $K_1 = 2 z^2/C_M^{1/p}$, the assumption that
$\norm{AX}_v\ge K_1\rho$ implies that $\norm{SAX}_v\ge \rho$,
assuming the net condition above and its failure probability bound.

For the upper bound case, consider $p\le 2$.
We have $\norm{AX}_v\le \rho/K_1$
implies $\norme{AW}\le \rho/K_1$,
so every column $y$ of $W$ has $\norm{Ay}_M\le \rho / K_1$.
Let $\gamma\equiv \rho / K_1$.

From \eqref{eq vec good} and  (i) of Lemma~\ref{lem v-dilation gen},
 $\norm{SAw}_M \le \frac32 \gamma$ for all $w\in\cS_\gamma$.
From (i) of Lemma~\ref{lem weak embed gen}, we then have
$\norm{SAw}_M \le \gamma \frac32 C_M^{1/p}$, for all $w\in\R^d$
inside $\cS_\gamma$. We have, again using Lemma~\ref{lem eM M},
\begin{align*}
\norm{SAX}_v
	  & = \norm{SAW}_v
	\\ & \le \norme{SAW}_v
	\\ & \le (z^{1/p} \gamma \frac32 C_M^{1/p})
	\\ & \le z^{1/p}(\rho/K_1 )\frac32 C_M^{1/p}
	\\ & = \rho z^{1/p} \frac32 C_M^{1/p}/K_1.
\end{align*}
So if $K_1 = z^2 \frac32 C_M^{1/p}$, the assumption that
$\norm{AX}_v\le \rho/K_1$ implies that $\norm{SAX}_v\le \rho$,
assuming the net condition above and its failure probability bound. The argument
for $p\ge 2$ follows similarly, and the lemma follows.
\end{proof}
\fi 

\begin{lemma}\label{lem tail bound 2}
With notation as in Lemma~\ref{lem M lev w},
for $r>0$ let $\hat r \equiv r/\gamma(A,M,w)$, and let $q\in\R^n$ have
\[
q_i\equiv \min\{1, \hat r \gamma_i(A,M,w)\}.
\]
Let $S$ be a sampling matrix generated using $q$,
with weights as usual $w'_i = w_i/q_i$.
Let $W\in\R^{d\times z}$, and $\delta>0$. There is an absolute constant $C$ so that for
$\hat r \ge C z \log(1/\delta)/\eps^2$, with failure
probability at most $\delta$,
\[
\norm{SAW}_{v,w'}= (1\pm\eps)\norm{AW}_{v,w}.
\]
\end{lemma}

\ifCONF\else
\begin{proof}
For $T_S$ the set of rows chosen by $S$,
let $Z_i$
be the random variable
$\Ibr{i\in T_S}\frac{w_i}{q_i} M(\norm{A_{i*}W}_2)$,
so that $\E[\sum_i Z_i] = \norm{AW}_v^p$.
We note that terms with $q_i=1$ can be ignored.
Using Lemmas~\ref{lem M lev w} and \ref{lem eM M},
\begin{align*}
\sum_i \Var[Z_i]
	   & \le \sum_i q_i \frac{w_i^2}{q_i^2} M(\norm{A_{i*}W}_2)^2
	\\ & \le \sum_i \frac1{\hat r\gamma_i(A,M,w)} \sum_{j\in [z]} w_i^2 M(A_{i*}W_{*j}) M(\norm{A_{i*}W}_2) 
	\\ & \le \frac1{\hat r} \sum_{j\in [z]} \norm{AW_{*j}}_M^p \sum_i w_i M(\norm{A_{i*}W}_2) & \text{by Lem.~\ref{lem M lev w}}
	\\ & = \frac1{\hat r}\norme{AW}^p\norm{AW}_v^p
	\\ & \le \frac{z}{\hat r}\norm{AW}_v^{2p}.	& \text{by Lem.~\ref{lem eM M}}
\end{align*}
Again using Lemmas~\ref{lem M lev w} and \ref{lem eM M},
\begin{align*}
|Z_i - \E[Z_i]|
	   & = (1-q_i)\frac{w_i}{q_i}M(\norm{A_{i*}W}_2)
	\\ & \le \frac{w_i}{q_i}\sum_{j\in[z]} M(A_{i*}W_{*j})
	\\ & \le \sum_{j\in [z]} \frac{1}{q_i}\sum_{j\in [z]} \norm{AW_{*j}}_M^p \gamma_i(A,M,w)
	\\ & \le \frac{1}{\hat r}\norme{AW}^p
	\\ & \le \frac{z}{\hat r}\norm{AW}_v^p.
\end{align*}
The lemma follows using
Bernstein's inequality, adjusting constants, and taking the $p$'th roots.
\end{proof}
\fi 

\begin{lemma}\label{lem M kinda embed}
Let $\delta, \rho>0$ and integer $z>0$.
For sampling matrix $S$ chosen as in Lemma~\ref{lem tail bound 2}
with
\[
r= O(\gamma(A,M,w) \eps^{-2} z dz\log(z/\eps)\log(1/\delta)),
\]
it holds with failure probability $\delta$ that
any rank-$z$ matrix $X$ with $d$ columns has the 
property that either $\norm{SAX}_v\ge \rho$,
or $\norm{SAX}_v = \norm{AX}_v(1\pm\eps) \pm \eps\rho$.
\end{lemma}

\ifCONF\else
\begin{proof}
We need only consider
$W\in\R^{d\times z}$ where $W=U\Sigma$
and $X$ has the singular value decomposition
$X=U\Sigma V^\top$, since $\norm{AW}_v = \norm{AX}_v$.

We apply Lemma~\ref{lem S bb},
so that with failure probability to be discussed,
$W$ having $\norm{AW}_v\ge K_1\rho$ must have
$\norm{SAW}_v \ge \rho$, one of the conditions
of the lemma; here $K_1 = O(z^2/C_M)$.

So consider $W\in \cB\equiv \{W\in\R^{d\times z}\mid \norm{AW}_v \le K_1\rho\}\cap\rspace_z(A^\top)$,
where $\rspace_z(A^\top)$ comprises $d\times z$ matrices whose columns are in $\rspace(A)$.
Let $\cN$ be an $\frac1{K_1}\eps\rho$-cover of $\cB$
with respect to the norm $\norm{A\cdot}_v$. To bound the size of $\cN$,
we use Lemma~\ref{lem v-eps-net gen}, where $\cV$ of that lemma
maps to $\rspace_z(A^\top)$, $d$ of that lemma maps to $dz$,
$\Vol()$ maps to the volume of a subset of $\rspace_z(A^\top)$,
considered as a subset of $\R^{d\dim(\rspace(A^\top))}$, $\cB_\rho$ of the lemma
maps to $\cB$, and $\eps$ of the lemma maps to $\frac1{K_1}\eps$.
We have
\[
|\cN| \le (4^pC_M^{-1}(K_1/\eps))^p)^{dz} = O((z^2/\eps)^{pdz}).
\]

Now for given $W'\in\cN$, apply Lemma~\ref{lem tail bound 2}
to obtain
\[
\norm{SAW'}_v = (1\pm\eps)\norm{AW'}_v
\]
with failure probability for given $W'$ to be discussed.

Also, for each
$W'\in\cN$, apply Lemma~\ref{lem S bb}, with
$\rho$ of that lemma equal 
to $\eps\rho$, and $X=W-W'$ with rank $z$. It follows
that with failure probability to be discussed,
for given $W'$, if $W$ has $\norm{A(W-W')}_v\le \eps\rho/K_1$,
then $\norm{SA(W-W')}_v\le \eps\rho$.
Assuming this condition, and the condition on $\norm{SAW'}_v$,
\begin{align*}
\norm{SAW}_v
	   & = \norm{SAW'}_v \pm \eps\rho
	\\ & = \norm{AW'}_v(1\pm\eps) \pm \eps\rho
	\\ & =(\norm{AW}_v\pm \eps\rho/K)(1\pm\eps) \pm \eps\rho
	\\ & =\norm{AW}_v(1\pm\eps) \pm \eps\rho, 
\end{align*}
up to a rescaling of $\eps$ by a constant factor. 

It remains to discuss failure probabilities and the sample size for $S$.
We have $|\cN|+1 = O((z^2/\eps)^{pdz})$ applications of Lemma~\ref{lem S bb},
so for $\delta_y$ the failure probability for a given $y\in\R^d$ having
$\norm{SAy}_v = (1\pm1/10)\norm{Ay}_v$, the failure probability
for these events is $\delta_y O(C_M^{-(1+p)d}(z^2/\eps)^{pdz})$.
Letting $\delta_W$ denote the failure probability for
having $\norm{SAW'}_v = (1\pm\eps)\norm{AW'}_v$ for given
$W'$, with $|\cN|$ such $W'$, we have total failure probability
within a constant factor of
\[
(\delta_W + \delta_y C_M^{-(1+p)d}))(z^2/\eps)^{pdz}.
\]
Applying Lemma~\ref{lem tail bound 2} for each case, a sample size
\[
O(\gamma(A,M,w) \eps^{-2} z^2 d \log(z/\eps)\log(1/\delta))
\]
suffices to give failure probability $\delta$. The lemma follows.
\end{proof}
\fi 


\section{Residual sampling for dimensionality reduction} \label{sec non_adapt}

The following theorem is a variation and extension of Theorem~9 of Deshpande and Varadarajan \cite{dv07}.

\def\hX{{\hat X}}

\begin{theorem}\label{thm non adapt det}
Fix $K\ge 2$.
Let $\hat X\in\R^{d\times d}$ be a projection matrix such that $\norm{A(I-\hX)}_v \le K\Delta^*$,
where as usual $\Delta^*\equiv \norm{A(I-X^*)}_v$, with $X^*\equiv \argmin_{\rank X=k}\norm{A(I-X)}_v$.
For value $r$, let $S$ be a sampling matrix built with probability vector $z$ defined by
$z'_i \equiv M(\norm{A_{i*}(I-\hX)}_2)$ and
$z_i = \min\{1, K_2 r z'_i/\sum_i z'_i\}$, for a constant $K_2$ to be determined.
Let $U$ be an orthogonal basis for the linear span of the rows of $SA$ combined with those
of $\hX$.
Then
there is $r=O(K k^{2+p}\eps^{-p-1}\log(k/\eps))$ such that with constant failure
probability, $\min_{\rank X=k}\norm{A(I-U^\top XU)}_v\le (1+\eps)\Delta^*$.
\end{theorem}

The main difference between this theorem and Theorem~9 of \cite{dv07} is that the latter considers a sampling procedure where the rows
of $SA$ are chosen sequentially, and the probability of choosing a row depends on the rows already chosen. However, the 
proof of \cite{dv07} can be adapted to show that the above non-adaptive version gives the same results. Secondarily,
we note that the proof of \cite{dv07} carries through for nice $M$-estimators.

\ifCONF\else
\begin{proof}
We will outline the changes needed to the proof of \cite{dv07}. (So some statements given here are proven
in \cite{dv07}.) Their proof analyzes the situation as the rows of $SA$ are chosen
one by one. We follow this analysis, even though our sample is chosen ``all at once.'' So order the rows of $SA$ arbitrarily,
and let $H_\ell$, $\ell\in [r]$, denote the linear span of the rowspace of $\hat X$ together with
the first $\ell$ rows in this ordering.

The analysis of \cite{dv07} considers $k+1$ phases in the sequence of the $H_\ell$,
where in phase $j$ there exists a rank-$k$ projection $X_j$ such
that:
\begin{enumerate}[(i)]
\item the dimension of $G\equiv \rspace(X_j) \cap H_\ell$ is at least $j$, that is, the $j$ smallest principal angles between  $\rspace(X_j)$ 
	and $H_\ell$ are zero,
\item and also $\norm{A(I-X_j)}_v\le (1+\delta)^j\Delta^*$, for a parameter $\delta \equiv \epsilon/2k$. That is, $X_0\equiv X^*$. 
\end{enumerate}

That is, the cost of $X_j$ gets slowly worse (ii), but $\rspace(X_j)\cap H_\ell$ gets larger in dimension.
The principal angles of $\rspace(X_k)$ with $H_\ell$ are all zero; that is, $X_k$ is contained in $H_\ell$. Moreover,
$\norm{A(I-X_k)}_v\le (1+\delta)^k\Delta^*\le (1+\eps)\Delta^*$, so for $U$ an orthogonal basis for $H_\ell$ at phase $k$,
$UX_kU^\top = X_k$ is an $\eps$-approximate solution, so a solution exists of the form given in the theorem statement.

Let $Y_\ell$ denote the rank-$k$ projection whose rowspace is that of $X_j$, but rotated about $G$ so as to contain the vector in $H_\ell$ realizing
the smallest nonzero principal angle with $X_j$. By containing that vector and still containing $G$, $Y_\ell$ has more than $j$ zero principal angles
with $H_\ell$,
and so can be $X_{j'}$ for $j'>j$ if it satisfies condition (ii). The proof of \cite{dv07}, and in particular Lemma 10,
shows that as long as $Y_\ell$ does not satisfy (ii), with high-enough probability a near-future sample row $A_{\ell' *}$
will have  residual $\norm{A_{\ell'*}(I-Y_\ell)}_2 \ge (1+\delta/2)\Delta^*$. Such a \emph{witness} to  $\norm{A(I-Y_\ell)}_v$ being large
means that the smallest nonzero principal angle of $X_j$ and $H_\ell$ (and so $Y_\ell$) will become smaller at step $\ell'$;
when the angle becomes small enough, $Y_\ell$ will satisfy (ii), and become $X_{j'}$ for $j'>j$.

We note that outside the proof of Lemma~10 of \cite{dv07}, the proof of Theorem 9 of \cite{dv07} relies on (in our terms),
the monotonicity of the $v$-norm, the triangle inequality $\norm{A_1+A_2}_v \le \norm{A_1}_v+\norm{A_2}_v$,
and relations in Euclidean geometry involving angles between subspaces
and distances of single points to other points or to subspaces.

We therefore focus on the proof of Lemma~10 of \cite{dv07}, which has the
key probabilistic claim that if
\begin{equation}\label{eq big resid}
\norm{A(I-Y_\ell)}_v>(1+\delta)\norm{A(I-X_j)}_v,
\end{equation}
then the probability of picking a witness row $A_{\ell'*}$,
having $\norm{A_{\ell'*}(I-Y_\ell)}_2 \ge (1+\delta/2)\Delta^*$, is at least $(\delta/5K)^p$, or equivalently,
\begin{equation}\label{eq heavy W}
\norm{A_{W*}(I-\hX)}_v \ge \frac{\delta}{5K}\norm{A(I-\hX)}_v,
\end{equation}
where $W\subset [n]$ is the set of indices of the witness rows, and $A_{W*}$ denotes the
matrix with those rows. This is Lemma~10 of \cite{dv07} in our notation.

We now prove this, by showing that \eqref{eq big resid} is false assuming that 
\eqref{eq heavy W} is false.

Let $\tX_\ell$ be the matrix projecting onto $H_\ell$. Then for $i\in W$,
and using that all members of $H_\ell$ are closer to $\rspace(Y_\ell)$ than
to $\rspace(X_j)$,
\begin{align*}
\norm{A_{i*}(I-Y_\ell)}_2
	   & \le  \norm{A_{i*}(I-\tX_\ell)}_2 + \norm{A_{i*}\tX_\ell(I-Y_\ell)}_2 & \mathrm{tri.\ ineq}, \rspace(Y_\ell)\subset H_\ell
	\\ & \le \norm{A_{i*}(I-\tX_\ell)}_2 + \norm{A_{i*}\tX_\ell(I-X_j)}_2
	\\ & \le 2\norm{A_{i*}(I-\tX_\ell)}_2 + \norm{A_{i*}(I-X_j)}_2 & \mathrm{tri.\ ineq}
	\\ & \le 2\norm{A_{i*}(I-\hat X)}_2 + \norm{A_{i*}(I-X_j)}_2. & \rspace(\hat X)\subset H_\ell.
\end{align*}
Combining this bound with $\norm{A_{i*}(I-Y_\ell)}_2 \le (1+\delta/2)\Delta^*$ for $i\notin W$,
we have for $i\in [n]$
\[
\norm{A_{i*}(I-Y_\ell)}_2
	\le (1+\delta/2) \norm{A_{i*}(I-X_j)}_2 + \Ibr{i\in W} 2\norm{A_{i*}(I-\hX)}_2,
\]
and so
\begin{align*}
M(\norm{A_{i*}(I-Y_\ell)}_2)^{1/p}
	   & \le M((1+\delta/2)\norm{A_{i*}(I-X_j)}_2 + \Ibr{i\in W}2\norm{A_{i*}(I-\hX)}_2)^{1/p} & M \mathrm{monotone}
	\\ & \le M((1+\delta/2)\norm{A_{i*}(I-X_j)}_2)^{1/p} + \Ibr{i\in W}M(2\norm{A_{i*}(I-\hX)}_2)^{1/p} & M^{1/p} \mathrm{subadd.}
	\\ & \le (1+\delta/2) M(\norm{A_{i*}(I-X_j)}_2)^{1/p} +\Ibr{i\in W} 2 M(\norm{A_{i*}(I-\hX)}_2)^{1/p}. & M \mathrm{poly}
\end{align*}
Using subadditivity of the $\ell_p$ norm,
\begin{align*}
\norm{A(I-Y_\ell)}_v
	     & \le (1+\delta/2)\norm{A(I-X_j)}_v + 2\norm{A_{W*}(I-\hX)}_v
	  \\ & \le (1+\delta/2)\norm{A(I-X_j)}_v + \frac{2\delta}{5K} K\Delta^* & \mathrm{if\ \eqref{eq heavy W}\ false}
	  \\ & \le (1+\delta)\norm{A(I-X_j)}_v,
\end{align*}
contradicting \eqref{eq big resid}. This implies the result of Lemma~10 of \cite{dv07}, and since the sampling distribution
is only used in the proof of that lemma, we have the same claim of Theorem 10 of \cite{dv07}, for the sampling probability vector~$z$.

(Note that this proof has $\hX$ in \eqref{eq heavy W}, and uses the inequalities
$\norm{A_{i*}(I-\tX_\ell)}_2\le \norm{A_{i*}(I-\hX)}_2\le K\Delta^*$;
the analogous statement in Lemma~10 of \cite{dv07} has $\tX_\ell$ instead of $\hX$,
and upper bounds $\norm{A_{i*}(I-\tX_\ell)}_2$ by $K\Delta^*$.
Aside from using general properties of $M()$, there are the only differences from \cite{dv07}. )
\end{proof}
\fi 


The following algorithm makes use of the sampling scheme implied by this theorem, but estimates the norms
of rows of $A(I - \hX)$ using Gaussians, as in Theorem~\ref{thm G est}.

\begin{algorithm}[H]
\caption{$\textsc{DimReduce}(A, k,\hX, \eps, K)$}
\label{alg non adapt}
{\bf Input:} $A \in \R^{n\times d}$, integer $k \geq 1$,
projection matrix $\hat X\in\R^{d\times d}$ given as $WW^\top$ for $W\in\R^{d\times d_W}$ for a value $d_W$,
$K\ge 1$ a quality bound for $\hX$, so that $\norm{A(I-\hX)}_v\le K\Delta^*$\\
{\bf Output:} $U\in\R^{d\times r_M}$ with orthonormal columns, for a parameter $r_M=\poly(k/\eps)$.
\begin{enumerate}
\item For $M\in\cL_p$, let $t_M\gets 1$; for $M\in\cM_2$, let $t_M$ be large enough  in $O(\log n)$;
\item Let  $G\in\R^{d\times t_M}$ have independent Gaussian entries with mean 0 and variance $1/t_M$;
\item Let $r_1$ be a large enough value in $O(K k^{2+p}\eps^{-p-1}\log(k/\eps))$;
\item For $M\in\cL_p$, let $r\gets r_1^{p+1}$, and for $M\in\cM_2$, let $r\gets r_1$;
\item For $i\in [n]$, let $q'_i \gets M(\norm{A_{i*}(I-\hX) G}_2)$;
\item For $i\in [n]$, let $q_i \gets \min\{1, K_2 r q'_i/\sum_i q'_i\}$, for a large enough constant $K_2$;
\item Let $S$ be a sampling matrix for $q$;
\item Return $U$ such that $U^\top$ is an orthogonal basis for the linear span of the rows of $SA$ combined with those of $\hX$.
\end{enumerate}
\end{algorithm}

\begin{theorem}\label{thm non adapt}
Let $K>0$ and $\hat X\in\R^{d\times d}$ be a projection matrix such that $\norm{A(I-\hX)}_v \le K\Delta^*$,
where as usual $\Delta^*\equiv \norm{A(I-X^*)}_v$, with $X^*\equiv \argmin_{\rank X=k}\norm{A(I-X)}_v$.
Here $\hat X$ is given as $WW^\top$, where $W\in\R^{d\times d_W}$.
Then with small constant failure probability,
$\textsc{DimReduce}(A,k,\hX,\eps,K)$ returns $U\in\R^{d\times (d_W+ K \poly(k/\eps))}$ such that
\[
\min_{\rank X=k}\norm{A(I-UXU^\top)}_v\le (1+\eps)\Delta^*.
\]
The running time is $O(\nnz(A) + dK^{2+2p}\poly(k/\eps)d_W)$  for $M\in\cL_p$  and
$O(\nnz(A)\log n+ dK^2\poly(k/\eps)d_W\log n)$ for $M\in\cM_2$.
\end{theorem}

\ifCONF\else
\begin{proof}
The proof is much like that of Theorem~\ref{thm G est}.

The time to compute all $\norm{A_{i*}(I-\hX) G}_2$ is $O((\nnz(A)+ d_W)t_M)$, and the time to compute
$U$ is $O(d)$ times the square of the number of rows of $S$, which is $O(r)$ with high probability,
where $r= (K\poly(k/\eps))^{p+1}$ for $\cL_p$ and $r =K\poly(k/\eps)$ for $\cM_2$.
The running time bound follows.

We will show that sampling with $q$ gives performance comparable to sampling with
the~$z$ of Theorem~\ref{thm non adapt det}.

For $M\in\cM_2$, we note that with $t_M=O(p^2\log n)$,
$\norm{yG}_2=\norm{y}_2(1\pm 1/2p)$ for $n$ vectors $y$ with high probability,
and so the polynomial growth bounds for $M$ imply that $M(\norm{yG}_2)=M(\norm{y}_2)(1\pm 1/2)$,
and so sample size $r_1$ suffices.

For $M\in\cL_p$, with $t_M=1$, $G$ will be a $d$-vector $g$.
The quantity $|A_{i*}(I-\hX)g|$ is a half-normal random variable; its mean is
proportional to $\norm{A_{i*}(I-\hX)}_2$, which for half-normal distributions
implies that its $p$'th power $q'_i$ has expectation
proportional to $z'_i = \norm{A_{i*}(I-\hX)}_2^p$.

By a Markov bound, with failure probability
$1/10$, $\sum_i q'_i \le 10 \E_g[\sum_i q'_i] = C_1\sum_i z'_i$,
for a constant $C_1$.  We condition on this event $\cE_Q$.

For given parameter
$s$, say that index $i$ is \emph{good} if $q'_i\ge z'_i/s^p$.
By standard properties of the Gaussian distribution, there is
an absolute constant $C$ so that
the probability that $i$ is not good is at most $C/s$.

We have $\sum_i z_i \prob_g\{i\mathrm{\ not\ good}\}\le Cr/s$,
and conditioning on $\cE_Q$ increases
this bound by at most $1/(1-1/10)=1.1$.
Let $\cE_f$ be the event that this sum is at most $11Cr/s$;
then $\cE_f$, given $\cE_Q$, holds with failure probability $1/10$.

Now condition on both events $\cE_f$ and $\cE_Q$.

Let $s=C_2r$, with $C_2\equiv 110C$,
so that an algorithm based on sampling with
$z$ will, with probability at least $9/10$,
choose only good indices. That is,
a $z$-sampling algorithm that is restricted to non-failed indices
will have failure probability at most 1/10 more than
one that isn't.

Now consider sampling with $q$, but restricting
the algorithm to good indices. A given
good index $i$ is chosen either with probability $1\ge z_i$,
or with probability
\[
K_2r_1^{p+1}q'_i/\sum_i q'_i
	\ge K_2(s/C_2)^p r (z'_i/s^p)/(C_1\cE_g[\sum_i q'_i])
	\ge rz'_i/\sum_i z'_i = z_i,
\]
for large enough constant $K_2$, so that $q_i \ge z_i$
for a good index $i$. We have that a $q$-sampling algorithm
picks indices with probabilities at least as large
as an algorithm that has failure probability at most
3/10 more than an unrestricted $z$-sampling algorithm.
Since picking not-good indices does not hurt the performance
guarantee, the theorem follows.
\end{proof}
\fi 


\section{Main Algorithms}

\subsection{Approximate Bicriteria Solutions}

We next give an algorithm, described informally in \S\ref{subsubsec point red 1},
for computing a bicriteria solution. The main algorithm is \textsc{ConstApprox}, which
calls \textsc{ConstApproxRecur}. We follow the algorithm with analysis of Theorem~\ref{thm const approx}.

\begin{algorithm}[H]
\caption{$\textsc{ConstApproxRecur}(A, w)$}
\label{alg:M_recur}
{\bf Input:} $A \in \R^{n'\times d'}$, $\hat A\in\R^{n'\times d}$, weight vector $w\in\R^{n'}$\\
{\bf Output:} Matrix $A'\in\R^{P_M \times d'}$ 
for a parameter $P_M$.
\begin{enumerate}
\item If $n'\le P_M$, {\bf return} $\hat A$. 
\item Compute a well conditioned basis of $A$, and leverage scores $q'_i = \gamma_i(A,M,w)$ as in Lemma~\ref{lem M lev w}
\item Let $r$ be a big enough value in $\poly(d')\sum_i q'_i$; if $M\in\cM_2$, let $r\gets Cr\log\log\log n$, for a constant $C$
\item Let $S$ be a sampling matrix for $A$, using probabilities $q_i \gets \min\{1, rq'_i/\sum_i q'_i\}$
\item If $M\in\cL_p$, scale the rows of $S$ by the corresponding $1/q_i^{1/p}$ and set $w'$ to be a vector of ones;
	if $M\in\cM_2$, set $w'_i \gets w_i/q_i$ for each row $i$ in $S$
\item {\bf return} {\sc ConstApproxRecur}$(SA, S\hat A, w')$.
\end{enumerate}
\end{algorithm}

\begin{algorithm}[H]
\caption{$\textsc{ConstApprox}(A, k)$}
\label{alg:M}
{\bf Input:} $A \in \R^{n\times d}$, integer $k \geq 1$\\
{\bf Output:} $\hat X = UU^\top$, where $U\in\R^{d\times P_M}$ with orthonormal columns,
for a parameter $P_M$.
\begin{enumerate}
\item For parameter $m=\poly(k)$, let $R\in\R^{d\times m}$ be a sparse embedding matrix from
Theorem~\ref{thm good colspace sparse} with constant $\eps$
\item $A'\gets \textsc{ConstApproxRecur}(AR, A, \bone^n)$
\item {\bf return} $UU^\top$, where $U^\top$ is an orthonormal basis for the rowspace of $A'$.
\end{enumerate}
\end{algorithm}

\begin{theorem}\label{thm const approx}
Let parameter $P_M = \poly(k)$ for $M\in\cL_p$, and $P_M=\poly(k)\log^3 n$ for $M\in\cM_2$.
With constant probability, the matrix $U$ output by 
{\sc ConstApprox}$(A, k)$ (Algorithm~\ref{alg:M})
has
$$\|A(I-UU^\top)\|_v \leq K\Delta^*,$$
where $K=\poly(k)$ for $M\in\cL_p$ and $K = (\log n)^{O(\log(k)}$ for $M\in\cM_2$,
and as usual $\Delta^*\equiv \min_{\rank X = k }\|A(I-X)\|_v$.
The running time is $O(\nnz(A)+ (n+d)\poly(k))$ with high probability.
\end{theorem}

\ifCONF\else
\begin{proof}
From Theorem~\ref{thm good colspace sparse} using constant $\eps$,
the matrix $X_1 = \argmin_{\rank X= k}\norm{ARX-A}_v$
has $\norm{A(I-X_1)}_v \le (1+3\eps)\Delta^*$.

We first consider $M\in\cL_p$,  for which, since $\gamma(AR,M)=\poly(d')=\poly(k)$,
we can set $P_M$ such that with small constant failure probability,
there are no recursive calls within {\sc ConstApproxRecur}.

Let $X_2 =  \argmin_{\rank X= k}\norm{S(ARX-A)}_v$. (Remember that
when we use $S$, the corresponding weight $w'$ in $\norm{}_v$ is the one constructed for $S$.)
We note that without loss of generality,
the rowspace of $X_2$ lies in the rowspace of $SA$, since otherwise there
is a rank-$k$ projection $Z$ onto the rows of $SA$ with $\norm{S(ARX_2Z - A)}_v$ smaller
than $\norm{S(ARX_2 - A)}_v$. Thus, the rows of $ARX_2$ are all in a $k$-dimensional
subspace of $SA$, and the output $U$ has $\norm{ARUU^\top - A}_v \le \norm{ARX_2-A}_v$.

It remains to show that $ \norm{ARX_2-A}_v$ is within a small factor of $\norm{ARX_1-A}_v$.
From the triangle inequality, for any $Y\in\R^{r\times d}$,
\[
\norm{S(ARY-A)}_v = \norm{S(ARX_1 - A)}_v \pm \norm{S(ARY - ARX_1))}_v.
\]
We can apply Lemma~\ref{lem S bb} to $\norm{AR(Y-X_1)}_v$, mapping
$AR$ to $A$ of the lemma, $d$ to $r$, $z$ to $2k$,
and $\rho$ to $8\Delta^*$, so that if $S$ has the property that,
for given $y\in\R^d$ with failure probability $\delta$ it holds
that $\norm{SAy}_M = (1\pm 1/10) \norm{Ay}_v$,
then there is $K_1 = \poly(k)$ such that with failure probability
at most $\delta \exp(\poly(r))$, for all $Y$ the condition $\norm{AR(Y-X_1)}_v\ge K_1 8 \Delta^*$
implies that $\norm{SAR(Y-X_1)}_v \ge 8\Delta^*$. From Lemma~\ref{lem tail bound 2},
$S$ can be chosen with $m=\poly(r)=\poly(k)$ such that $\delta\exp(\poly(r)) < 1/10$.
Assume this event  $\cE$ holds.

Since $\E[\norm{S(ARX_1-A)}_v] = \norm{ARX_1-A}_v$, with probability
at least $1/2$,
\[
\norm{S(ARX_2-A)}_v \le \norm{S(ARX_1-A)}_v \le 2(1+3\eps)\Delta^* \le 4\Delta^*,
\]
and so $\norm{SAR(X_1-X_2)}_v \le \norm{S(ARX_1-A)}_v + \norm{S(ARX_2-A)}_v \le 8\Delta^*$.
Therefore assuming $\cE$ holds, $\norm{AR(X_1-X_2)}_v \le 8K_1\Delta^*$,
and
\[
\norm{ARX_2-A}_v \le \norm{ARX_1-A}_v + \norm{AR(X_1-X_2)}_v
	\le (2+8K_1)\Delta^*.
\]
This implies that $X_2$, and so the returned $\hX$, are within $K=\poly(k)$
of optimal. The theorem follows, for $M\in\cL_p$.

For nice general $M\in\cM_2$, we have only the bound
$\gamma(AR,M,w) = \poly(k)\sqrt{n}\log(1+\norm{w}_\infty)$, and so there will be recursive calls
in {\sc ConstApproxRecur}.

The expected value
\[
\E_S\norm{w'}_1 = \E_S[\sum_i w_i/q_i] = \sum_i w_i = \norm{w}_1,
\]
so with failure probability $1/\log n$, $\norm{w'}_1 \le \norm{w}_1\log n$,
and at recursive depth $c\le 2\log \log n$, with failure probability at most $(2\log \log n)/\log n$,
$\norm{w}_\infty \le \norm{w}_1\le n(\log n)^c\le n^2$ for large enough $n$.

The random variable $|T_S|$, the number of rows chosen for $S$, is a sum of random variables
with mean $r$, sum of variances $r$,
and maximum value for each variable at most 1, so by Bernstein's inequality $|T_S|$ is within a constant
factor of its expectation, $r$, with failure probability at most $\exp(-r)$.

So with constant
failure probability, the number of rows $n'$ at recursive depth $2\log \log n$ is most
$\poly(k)\log(1+\norm{w}_\infty)^2 n^{1/2^c}\log\log n \le \poly(k)\log^3n$, the promised value of $P_M$.

Due to the multiplication by $\log\log\log n$ in Step 3, via Lemma~\ref{lem tail bound 2},
the failure probability for the sampling approximation bounds is $O(1/\log\log n)$,
or small constant overall.

With a recursive depth at most $2\log\log n$, the blow up in approximation factor is
$\poly(k)^{2\log \log n}$, which is in $(\log n)^{O(\log(k))}$ as $n\rightarrow\infty$.
The quality bound follows.

The running time of the body of \textsc{ConstApprox} is $O(\nnz(A))$, since $R$ is a sparse embedding.
The running time of the body of \textsc{ConstApproxRecur} is $O(n\poly(k))$, and this dominates
the running time for any recursive calls, since $n$ is reduced in size at least geometrically when it is larger
then $P_M$.
The theorem follows.
\end{proof}
\fi 

Note the running time can be made to be $O(\nnz(A))+ O(n\poly(k))$, that is, without any particular dependence on $d$.

\subsection{$\eps$-Approximations}

We now give the main algorithm.

We assume
an algorithm $\textsc{SmallApprox}(\hat A, B, C, w, k,\eps)$ 
that returns an $\eps$-approximate minimizer of $\norm{\hat A XB-C}_v^p$ over rank-$k$ projections $X$,
where the dimensions of $\hat A$, $B$, and $C$ are all $\poly(k/\eps)$ for $M\in\cL_p$,
and in $\poly(k/\eps)\log n$ for $M\in\cM_2$.

Here \textsc{SmallApprox} for $\cL_p$
is given in the proof of Theorem~\ref{thm small approx}, below,
and the reader must provide their own \textsc{SmallApprox} for $\cM_2$. 

First we give and analyze an algorithm for $M\in\cL_p$, then similarly for $\cM_2$.

\begin{algorithm}[H]
\caption{$\textsc{Approx}\cL_p(A, k,\epsilon)$}
\label{alg eps Lp}
{\bf Input:} $A \in \R^{n\times d}$, integer $k \geq 1$, $\epsilon>0$\\
{\bf Output:} $V\in\R^{d\times k}$ with orthonormal columns
\begin{enumerate}
\item Let $\hX \gets \textsc{ConstApprox}(A,k)$ // {\small Alg. \ref{alg:M}, properties in Thm.~\ref{thm const approx} }
\item Let $U \gets \textsc{DimReduce}(A, k, \hX,\eps, K)$  // {\small Alg. \ref{alg non adapt}, properties in Thm.~\ref{thm non adapt}, $K=\poly(k/\eps)$ the quality bound for $\hX$}
\item Let $m$ denote the number of columns of $U$, where $m=\poly(K)=\poly(k/\eps)$
\item Let $S\in\R^{\poly(m/\eps)\times d}$ be a sparse embedding // {\small a lopsided embedding for $(U, A^\top)$ as in Lem.~\ref{lem lopsided sparse}}
\item Compute $q'_i$ as estimates of $\gamma_i(A[S^\top\ U], M)$ via the methods of Theorems~\ref{thm well cond} and \ref{thm G est}
\item Let $r_1\gets\gamma(A[S^\top\ U],M, w)\poly(k/\eps)$ as in Lemma~\ref{lem M kinda embed} for $A[S^\top\ U]$
\item Find sampling matrix $T$ with probabilities
	$q_i = \min\{1, {\hat d}^{p/2} r_1^{c+1} q'_i/\sum_i q'_i\}$, as in Theorem~\ref{thm G est}, where $\hat d$ is the total number of columns
	of $S^\top$ and $U$
\item Rescale $T$ with $1/q_i^{1/p}$; suppose $T$ has $n_T$ entries
\item Let $Z \gets \textsc{SmallApprox}(TAU, U^\top S^\top, TAS^\top, \bone^{n_T},\eps)$,
where $Z=WW^\top$, $W\in\R^{\poly(k/\eps)\times k}$ with orthonormal columns,
and {\bf return} $U W$.
\end{enumerate}
\end{algorithm}

\begin{theorem}\label{thm Approx Lp}
Let $k \geq 1$, $\eps \in (0,1)$, and $1 \leq p = a/b \in [1,2)$ for integer constants $a,b$.
Let parameter $P_M = \poly(k/\eps)$.
Algorithm~\ref{alg eps Lp} takes $O(\nnz(A) + (n+d)\poly(k/\eps) + \exp(\poly(k/\eps)))$ time
to find rank-$k$ projection $X_1=VV^\top$, where with small constant failure probability
$X_1$ is an $\eps$-approximate solution to $\min_{\rank X=k}\norm{A(I-X)}_v$.
\end{theorem}

\ifCONF\else
\begin{proof}
First consider correctness. We make claims that each hold with some failure probability; we will account
for these probabilities, but assume for now that the claims hold.

From Theorem~\ref{thm non adapt}, for $U$ as in Step 2 of \textsc{Approx},
an $\eps$-approximate
solution to
\begin{equation}\label{eq UXU}
\min_{\rank X= k} \norm{A(I-UXU^\top)}_v
\end{equation}
will yield an $\eps$-approximate solution
to $\min_{\rank X=k}\norm{A(I-X)}_v$.
From Lemma~\ref{lem lopsided sparse}, there is $S$ of the given dimensions which
is a lopsided embedding for $(U, A^\top)$ with respect to $\norm{}_h$, so that a solution to
$\norm{A(I-UXU^\top)S^\top}_v$ will by Lemma~\ref{lem lopsided} be an $\eps$-approximate solution
to  \eqref{eq UXU}.

We show that the sampling matrix $T$ chosen in Step 7 of $\textsc{Approx}\cL_p$ has the property
that an $\eps$-approximate minimizer of $\norm{TA(I-UXU^\top)S^\top}_v$ is also one for \eqref{eq UXU}.
We apply Lemma~\ref{lem M kinda embed}, where $S$ of the lemma maps to a matrix $\hat T$,
$A$ of the lemma
maps to $A[S^\top\ U]$, and for the $X$ of the lemma, we are interested in $\poly(k/\eps)\times m$
matrices of the form
$[-I\ SUX]^\top$, whose product with $A[S^\top\ U]$ is $AS^\top - UXU^\top S^\top$, for symmetric $X$. The $d$ and the $z$
of the lemma are therefore $\poly(k/\eps)$, so that the $r$ of the lemma is $\gamma(A[S^\top\ U], M,w)\poly(k/\eps)$,
the $r_1$ of Step 6. With $\rho$ of Lemma~\ref{lem M kinda embed} set to $10\Delta^*$,  we have that the rank-$k$
minimizer $Z$
of $\hat TA(I-UXU^\top)S^\top$ must have cost within additive $\eps 10\Delta^*$ and relative $1+\eps$
of the best possible for  $A(I-UXU^\top)S^\top$, which is within $1+O(\eps)$ of $\Delta^*$.

It remains to show that the estimates $q'_i$ are suitable for choosing a sample, using the given
expected sample size.
This follows from Theorem~\ref{thm G est}.
The chosen $T$ therefore preserves approximate
solutions.

With small constant failure probability, $n_T=\poly(k/\eps)$,
because $\gamma(A[S^\top\ U],M) = \poly(k/\eps)$.

There are only a constant number of events, and each hold with small (enough)
constant failure probability. Therefore the failure probability for $\cL_p$ is small constant.

The running time of \textsc{ConstApprox} is $O(\nnz(A)+ n\poly(k))$, from Theorem~\ref{thm const approx}.
The running time of \textsc{DimReduce} is $O(\nnz(A) + dK^{2+2p}\poly(k/\eps)d_W)$,
from Theorem~\ref{thm non adapt};
here the approximation factor $K$ from \textsc{ConstApprox} is $\poly(k)$, and $d_W=\poly(k)$ also,
so the running time of \textsc{DimReduce} is $O(\nnz(A) + d\poly(k/\eps))$.

Since $U$ has $\poly(K)=\poly(k/\eps)$ rows, $S$ has $\poly(K/\eps)=\poly(k/\eps)$ rows. Computation of the change-of-basis matrix
$R^{-1}$ for $q'_i$ takes $O(\nnz(A)+\poly(\hat d/\eps)) = O(\nnz(A)+\poly(K/\eps))$ time via Theorem~\ref{thm well cond},
where $H$ of the theorem maps to $[S^\top\ U]$. Computation of the estimates $q'_i$ takes
$O(\nnz(A) + n\poly(K/\eps))$ time. 

Therefore,
except for the final call to \textsc{SmallApprox}, the total time is $O(\nnz(A)+(n+d)\poly(k/\eps))$, as claimed.

\end{proof}
\fi 

\begin{algorithm}[H]
\caption{$\textsc{ApproxRecur}\cM_2(A, U, S, w, k, \epsilon)$}
\label{alg eps recur}
{\bf Input:} $A \in \R^{n'\times d}$, weight vector $w\in\R^{n'}$, $S^\top, U\in\R^{d\times\poly(K/\eps)}$ with $\hat d=\poly(K/\eps)$ total columns, $U$ with orthonormal columns; here $K=(\log n)^{O(\log k)}$\\
{\bf Output:} $V\in\R^{d\times k}$ with orthonormal columns
\begin{enumerate}
\item if $n'\le P_M$ for a parameter $P_M$: let $Z \gets \textsc{SmallApprox}(AU, U^\top S^\top, AS^\top, w, k,\eps)$,
where $Z=WW^\top$, $W\in\R^{\poly(K/\eps)\times k}$ with orthonormal columns,
and {\bf return} $U W$.
\item Compute $q'_i$ as estimates of $\gamma_i(A[S^\top\ U], M,w)$ via the methods of Theorems~\ref{thm well cond} and \ref{thm G est}
\item Let $r_1\gets\gamma(A[S^\top\ U],M, w)\poly(K/\eps)$ as in Lemma~\ref{lem M kinda embed} for $A[S^\top\ U]$
\item Find sampling matrix $T$ with probabilities
	$q_i = \min\{1,  (n')^\kappa(\log n')(\log \log n') r_1 q'_i/\sum_i q'_i\}$, as in Theorem~\ref{thm G est},
	and associated weights $w'$
\item {\bf return} $\textsc{ApproxRecur}(TA, U, S, w', k,\epsilon)$.
\end{enumerate}
\end{algorithm}

\begin{algorithm}[H]
\caption{$\textsc{Approx}\cM_2(A, k,\epsilon)$}
\label{alg eps M2}
{\bf Input:} $A \in \R^{n\times d}$, integer $k \geq 1$, $\epsilon>0$\\
{\bf Output:} $V\in\R^{d\times k}$ with orthonormal columns
\begin{enumerate}
\item Let $\hX \gets \textsc{ConstApprox}(A,k)$ // {\small Alg. \ref{alg:M}, properties in Thm.~\ref{thm const approx}}
\item Let $U \gets \textsc{DimReduce}(A, k, \hX,\eps, K)$  // {\small See Alg. \ref{alg non adapt} and Thm.~\ref{thm non adapt}; $K$ the quality bound for $\hX$}
\item Let $m$ denote the number of columns of $U$, where $m= d_W + K\poly(k/\eps)$ // {\small where $\hat X=WW^\top$, $W\in\R^{d\times d_W}$}
\item Let $S\in\R^{\poly(m/\eps)\times d}$ be a sparse embedding //{\small a lopsided embedding for $(U, A^\top)$ as in Lem.~\ref{lem lopsided sparse} }
\item {\bf return} $\textsc{ApproxRecur}(A, U, S, \bone^n, k, \eps/C\log\log n)$.
\end{enumerate}
\end{algorithm}

\begin{theorem}\label{thm Approx M2}
Let $k \geq 1$, $\eps \in (0,1)$, and $1 \leq p = a/b \in [1,2)$ for integer constants $a,b$.
For a value $K=(\log n)^{O(\log k)}$, let parameter $P_M=\poly(K/\eps)\log^3n $.
Up to calls to \textsc{SmallApprox},
Algorithm~\ref{alg eps M2} takes $O(\nnz(A)\log n + (n+d) \poly(K/\eps) )$ time
to find rank-$k$ projection $X_1=VV^\top$, where with small constant failure probability
$X_1$ is an $\eps$-approximate solution to $\min_{\rank X=k}\norm{A(I-X)}_v$.
\end{theorem}

\ifCONF\else
\begin{proof}
The correctness argument is the same as in Theorem~\ref{thm Approx Lp}, up to the sample size values used.
The estimates $q'_i$ are suitable for choosing a sample, using the given
expected sample size $\gamma(A[S^\top\ U],M, w)\poly(K/\eps)(n')^\kappa(\log n')(\log \log n')$.
This follows from standard random projections results.
The chosen $T$ therefore preserves approximate solutions.

Turning to running time: similar to the proof of Theorem~\ref{thm const approx}, the size
at the first recursive call is
\[
n' = \poly(K k/\eps)n^{1/2+\kappa}\log n \log(1+\norm{w}_\infty)
\]
with constant failure probability, and so the recursive
depth is $O(\log\log n)$ (to get to $P_M=\poly(k/\eps)\log^3n$),
and relative errors $O(\eps/\log\log n)$ yield
total relative error $\eps$. The inclusion of the $\log \log n'$ term in the sample size implies via Lemma~\ref{lem tail bound 2}
that the sampling failure probability per step is $O(1/\log n)$. Since the failure probability for estimation of $q'_i$
is $1/n$ at each call, via Theorem~\ref{thm G est}, the total failure probability is a small constant.
The remainder of the correctness analysis is the same as for $\cL_p$.

The running time of \textsc{ConstApprox} is $O(\nnz(A)+ n\poly(k))$, from Theorem~\ref{thm const approx}.
The running time of \textsc{DimReduce} is $O(\nnz(A)\log n+ dK^2\poly(k/\eps)d_W\log n)$,
from Theorem~\ref{thm non adapt};
here the approximation factor $K$ from \textsc{ConstApprox} is $\poly(k)^{\log\log n} = (\log n)^{O(\log k)}$, and $d_W=\poly(k)\log^3n$,
so the running time of \textsc{DimReduce} is $O(\nnz(A)\log n + d\poly(K/\eps))$. Note also that the number of columns $m$ of $U$
is $d_W+K\poly(k/\eps) = \poly(K/\eps)$.

Since $U$ has $\poly(K/\eps)$ rows, $S$ has $\poly(K/\eps)$ rows. Computation of the change-of-basis matrix
$R^{-1}$ for $q'_i$ takes $O(\nnz(A)+\poly(\hat d/\eps)) = O(\nnz(A)+\poly(K/\eps))$ time via Theorem~\ref{thm well cond},
where $H$ of the theorem maps to $[S^\top\ U]$. Computation of the estimates $q'_i$ takes
$O(\nnz(A) + n\poly(K/\eps))$ time. Thus the body of \text{ApproxRecur} takes $O(\nnz(A) + n\poly(K/\eps))$ time.

With recursive depth $O(\log\log n)$, the total work up to \textsc{SmallApprox}
for \textsc{ApproxRecur} is $O(\nnz(A)+ n \cdot \poly(K/\eps)\log \log n)$. Adding this to the
time $O(\nnz(A)\log n + (n+d)\poly(K/\eps))$ gives the claimed time.
\end{proof}
\fi 


\section{Algorithm for Small Problems}\label{sec basu}

In this section, for $M\in\cL_p$ with $p$ rational,
we show how to find a rank-$k$ subspace which is a $(1+\eps)$-approximation.
We will apply 
a simplified form of Theorem 3 of Basu, Pollack, and Roy \cite{bpr96}.
\begin{theorem}(\cite{bpr96})\label{thm:basu}
Given a set $K = \{\beta_1, \ldots, \beta_s\}$ of $s$ polynomials each of degree at most $d$ in $k$ variables 
with coefficients in $\mathbb{R}$, the problem of deciding whether there exist $X_1, \ldots, X_k \in \mathbb{R}$ for which 
$\beta_i(X_1, \ldots, X_k) \geq 0$ for all $i\in [s]$
can be solved deterministically with $(sd)^{O(k)}$ 
arithmetic operations over $\mathbb{R}$.
\end{theorem}

\begin{theorem}\label{thm small approx}
Assume $p=a/b$ for integer constants $a,b\ge 1$, and let $\eps\in (0,1)$, and integer $k\in[0,m]$.
Given $A\in\R^{m'\times m}$, $B\in\R^{m\times m''}$, and $C\in\R^{m'\times m''}$, with $m',m'' = \poly(m/\eps)$,
a rank-$k$ projection matrix $X$ can be found that minimizes $\norm{AXB-C}_v^p$ up a $(1+\eps)$-factor,
in time $\exp(\poly(m/\eps))$.
\end{theorem}

\ifCONF\else
\begin{proof}
We use Theorem \ref{thm:basu}. Write $X = WW^T$, where $W$ is an $m \times k$ matrix with orthonormal columns.
We think of the entries of $W$ as being variables.
We add the quadratic and linear constraints enforcing the columns of $W$ to be orthonormal. 

Let $D_i$ be the $i$-th row of $AWW^TB - C$. Then $\|AXB - C\|_v^p = \sum_{i=1}^{t'} \|D_i\|_2^p$. 
The entries of $D_i$ are a quadratic polynomial in the entries of $W$. Let $d_i = \|D_i\|_2^{2a}$, which is a degree at most $4a$
polynomial in the entries of $W$ (recall that $p = a/b$). We introduce a variable $e_i$ for each $i \in [m]$, 
with the constraint that $e_i^{2b} = d_i$ and $e_i \geq 0$. 

Note that since $e_i = d_i^{1/(2b)} = \|D_i\|_2^{(2a)/(2b)} = \|D_i\|_2^p$, 
our objective is to minimize $\sum_{i=1}^m e_i$, which is a linear function in the 
$e_i$ variables. The total number of variables in our system is $mk + 2m$, to specify the entries of $W$, and the $d_i$ and $e_i$
for $i \in [m]$. Each polynomial in the system
is of degree $O(1)$, assuming $p = a/b$ and $a,b$ are integer constants, 
and the coefficients can be described using $\poly(nd)$ bits assuming the coefficients of $A$ have this 
property (note that the coefficients of $S$ are in $\{0,1,-1\}$, while the coefficients of $T$ are sampling probabilities
which can be rounded to the nearest power of $2$, and dropped if they are less than $1/n^2$, as otherwise the corresponding row will not be
sampled whp). The total number of polynomial constraints is $O(mk + k^2)$. 

We can minimize $\sum_{i=1}^m e_i$ by performing a binary search. If the cost of the objective function is non-zero, then
using that $\|x\|_2 \leq \|x\|_p$ for $p \leq 2$, while $\|x\|_2 \geq \frac{1}{\poly(d)} \|x\|_p$ for constant $p > 2$, we have
that for rank-$k$ matrices $X$, $\|AXB-C\|_v^p \geq \frac{1}{\poly(d)} (\sigma_{k+1}(C))^{p/2}$, where $\sigma_{k+1}(C)$
is the $(k+1)$-st singular value of $C$. It is known that for an $n \times d$ matrix $V$ with entries specified by $\poly(nd)$
bits, it holds that $\sigma_{k+1}(V) \geq \left (\frac{1}{\exp(\poly(nd))} \right )^k$ if $V$ has rank larger than $k$; see inline (10)
in the proof of Lemma 4.1 of \cite{CW09}. It follows that we can do a binary search using $\poly(m/\eps)$ steps. 

By applying Theorem \ref{thm:basu} $\poly(m/\eps)$ times, once for each step in the binary search, 
we can solve the problem $\min_{\textrm{rank-k projections} X} \|AXB-C\|_v^p$ up to
a $(1+\eps)$-factor given $A$, $B$ and $C$, in time $\exp(\poly(m/\eps))$. 
\end{proof}
\fi 

\begin{remark}
We note that the techniques in this section may also apply more generally to $M$-Estimators. For instance, for the Huber loss function,
it is piecewise polynomial so we could introduce variables for each of the pieces. However, at the moment we reduce an
instance of the $M$-Estimator problem from $n$ points to, at best, $\poly(k\eps^{-1}\log n)$ points, and this $\poly(\log n)$ is problematic when trying
to apply the above ideas since the algorithm is exponential in it.
\end{remark}


\section{Hardness}

Let $E = \{e_i \mid i \in [d]\}$ be the set of vertices of the standard $(d-1)$-simplex, where $e_i$ is the standard unit vector in the $i$-th direction. 
Fix a $k \in [d]$ and a real number $p \in [1, 2)$ and $p$ independent of $k$ and $d$. 
Let $V$ be a $k$-dimensional subspace of $\mathbb{R}^d$, represented as the column span of a 
$d \times k$ matrix with orthonormal columns. 
We abuse notation and let $V$ be both the subspace and the corresponding matrix. 
For a set $Q$ of points, let 
$$c(Q, V) = \sum_{q \in Q} d(q, V)^p = \sum_{q \in Q} \|q^T(I-VV^T)\|_2^p = \sum_{q \in Q} (\|q\|^2 - \|q^TV\|^2)^{p/2},$$
be the sum of $p$-th powers of distances of points in $Q$, i.e., $\|Q-QVV^T\|_v$ with associated $M(x) = |x|^p$. 

\begin{lemma}\label{lem:simplex}
For any $k \in \{1, 2, \ldots, d\}$, the $k$-dimensional subspaces $V$ which minimize $c(E, V)$ 
are exactly the $\binom{n}{k}$ subspaces formed by taking the span of $k$ distinct standard unit 
vectors $e_i$, $i \in [d]$. The cost of any such $V$ is $d-k$. 
\end{lemma}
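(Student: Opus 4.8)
The plan is to reduce the geometric minimization to an optimization over the diagonal of the projection matrix. Writing $P = VV^\top$ for the orthogonal projection onto the column span of $V$, the identity in the statement gives $d(e_i,V)^2 = \|e_i\|^2 - \|e_i^\top V\|^2 = 1 - P_{ii}$, so $c(E,V) = \sum_{i=1}^d (1-P_{ii})^{p/2}$. Since $P$ has rank $k$ and $P = P^\top = P^2$, we get $P_{ii} = P_{ii}^2 + \sum_{j\neq i} P_{ij}^2$, whence $P_{ii}\in[0,1]$, and $\sum_i P_{ii} = \operatorname{tr} P = k$. Setting $a_i := 1 - P_{ii}$, the vector $a = a(V)$ lies in the hypersimplex $\Delta := \{a\in[0,1]^d : \sum_i a_i = d-k\}$ and $c(E,V) = g(a(V))$ where $g(a) := \sum_{i=1}^d a_i^{p/2}$. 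In particular $\min_V c(E,V)\geq \min_{a\in\Delta} g(a)$.

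First I would minimize $g$ over $\Delta$. Because $p\in[1,2)$ we have $p/2\in[1/2,1)$, so $t\mapsto t^{p/2}$ is strictly concave on $[0,\infty)$, hence $g$, being a coordinatewise sum of such functions, is strictly concave on $\Delta$. A strictly concave function on a polytope attains its minimum only at vertices: if an optimal $x$ were not a vertex it would be the midpoint of two distinct feasible points $y\neq z$, and strict concavity together with optimality would give $g(x) > \tfrac12(g(y)+g(z)) \geq g(x)$, a contradiction. The vertices of $\Delta$ are exactly the $0/1$ vectors with $d-k$ ones (the equality $\sum_i a_i = d-k$ is always tight, so a vertex needs $d-1$ tight box constraints, pinning $d-1$ coordinates to $\{0,1\}$, and integrality of the prescribed sum then pins the last coordinate to $\{0,1\}$ as well). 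On such a vertex $g = d-k$, so $\min_{a\in\Delta} g(a) = d-k$, attained precisely at those $\binom{d}{k}$ vertices.

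Next I would transfer this back to subspaces. For $S\subseteq[d]$ with $|S|=k$, the coordinate subspace $\operatorname{span}\{e_i : i\in S\}$ has $P = \sum_{i\in S}e_ie_i^\top$, so $a(V)$ is the indicator of $[d]\setminus S$, a vertex of $\Delta$; thus its cost is $d-k$, which matches the lower bound, so these subspaces are optimal. Conversely, if $V$ is optimal then $g(a(V)) = d-k = \min_\Delta g$, so $a(V)$ must be one of the vertices of $\Delta$, i.e. $P_{ii}\in\{0,1\}$ for every $i$. But $P_{ii} - P_{ii}^2 = \sum_{j\neq i}P_{ij}^2$ forces $P_{ij}=0$ for all $j\neq i$ whenever $P_{ii}\in\{0,1\}$, so $P$ is a diagonal $0/1$ matrix; with $S = \{i : P_{ii}=1\}$ this gives $P = \sum_{i\in S}e_ie_i^\top$ and $|S| = \operatorname{tr} P = k$, so $V$ is the span of $k$ distinct standard unit vectors, as claimed.

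The one point that needs genuine care is the word ``exactly'' — ruling out non-coordinate optima — and it rests entirely on the strictness of the concavity argument (so that $g$ is minimized over $\Delta$ \emph{only} at vertices, not merely at some set that happens to include a vertex) together with the clean vertex description of the hypersimplex; everything else is a short computation. One could alternatively invoke the Schur--Horn theorem to identify the set of diagonals of rank-$k$ projections with $\Delta$ exactly, but this is more than is needed: the lower bound only uses $a(V)\in\Delta$, and the characterization only uses that an optimal $a(V)$ must be a vertex of $\Delta$.
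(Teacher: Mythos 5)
Your proof is correct and is essentially the paper's approach: both reduce the problem to minimizing $\sum_i a_i^{p/2}$ over the hypersimplex $\{a \in [0,1]^d : \sum_i a_i = d-k\}$ and show the minimizers are exactly the $0/1$ vectors with $d-k$ ones, the only difference being that you justify this extremal step via strict concavity and extreme points of the polytope rather than the paper's explicit exchange inequality (\ref{eqn:ineq}), which encodes the same concavity fact. A minor bonus of your write-up is that you make the ``exactly'' direction fully explicit at the matrix level, noting that $P_{ii}\in\{0,1\}$ together with $P_{ii}-P_{ii}^2=\sum_{j\neq i}P_{ij}^2$ forces $P$ to be a diagonal $0/1$ projection, a step the paper leaves implicit.
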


\ifCONF
\begin{IEEEproof}
\else
\begin{proof}
\fi
By definition of $E$, $c(E, V) = \sum_{i \in [d]} (1 - \|V_{i,*}\|_2^2)^{p/2}$, where $V_{i,*}$ is
the $i$-th row of $V$. Make the change of variables $b_i = 1-\|V_{i,*}\|_2^2$. Then
$\sum_{i=1}^d b_i = d - k$, using that $\|V\|_F^2 = k$. 

Consider the optimization problem
$\min_b \sum_{i=1}^d b_i^{p/2}$ subject to $\|b\|_1 = d-k$ and $b_i \in [0,1]$ for all $i$. Since 
\begin{eqnarray}\label{eqn:ineq} 
(x-\delta)^t + (y+\delta)^t < x^t + y^t,
\end{eqnarray}
for any $x \leq y$ with $x,y \in [0,1]$, $\delta > 0$, and $x-\delta, y+\delta \in [0,1]$, for any $t \in (0, 1)$, it follows that the minimizer $b$ has $k$ coordinates equal to $0$ and remaining $d-k$ coordinates equal to $1$. There are $\binom{d}{k}$ such $b$, all realizing the same minimum value of $\sum_{i=1}^d b_i^{p/2} = d-k$. Further, any other vector $b$ has a strictly larger value of $\sum_{i=1}^d b_i^{p/2}$, since one can find two coordinates $0 < b_i < b_{i'} < 1$ and use (\ref{eqn:ineq}) to find a vector $b'$ with $\|b'\|_1 = d-k$ and for which $\sum_{i=1}^d (b'_i)^{p/2} < \sum_{i=1}^d b_i^{p/2}$. 

Finally, note that we seek to solve this optimization problem, subject to the additional constraints that $b_i = 1 - \|V_{i,*}\|_2^2$. Therefore, the minimum value of our optimization problem is at least $d-k$. Note that each of the $\binom{n}{k}$ subspaces formed by taking the span of $k$ distinct standard unit vectors $e_i$, $i \in [d]$, satisfies that $b_i = 1$ for exactly $d-k$ values of $i$, and $b_i = 0$ otherwise. Therefore, each of these $\binom{n}{k}$ subspaces has the minimum objective function value. 
\ifCONF
\end{IEEEproof}
\else
\end{proof}
\fi

\begin{theorem}\label{thm:main:hardness}
Given a set $Q$ of $\poly(d)$ points in $\mathbb{R}^d$, for a sufficiently small $\eps = 1/\poly(d)$, 
it is NP-hard to output a $k$-dimensional subspace $V$ of $\mathbb{R}^d$ for which 
$c(Q, V) \leq (1+\eps) c(Q, V^*)$, where $V^*$ is the $k$-dimensional subspace minimizing 
the expression $c(Q, V)$, that is, $c(Q,V) \geq c(Q, V^*)$ for all $k$-dimensional subspaces $V$. 
%
\end{theorem}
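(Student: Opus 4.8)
The plan is to reduce from $k$-\textsc{Clique} on $r$-regular graphs, which remains NP-hard (given an arbitrary graph, a standard regularization gadget --- taking several disjoint copies and, for each vertex of deficiency $\delta$, adding a $\delta$-regular bipartite graph among its copies --- yields an $r$-regular graph with the same clique number for cliques of size at least $3$). Given an $r$-regular $G$ on vertex set $[d]$ and a target $k$, I build a point multiset $Q\subset\mathbb{R}^d$ in two parts: (i) a \emph{scaled simplex} of $M$ copies of each standard unit vector $e_i$, where $M=\poly(d)$ is fixed later; and (ii) an \emph{edge gadget} of one point $\tfrac1{\sqrt2}(e_i+e_j)$ per edge $\{i,j\}\in E(G)$. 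All points have unit norm and $|Q|=dM+|E(G)|=\poly(d)$. By (a quantitative refinement of) Lemma~\ref{lem:simplex}, part (i) forces any near-optimal $V$ to be close to a coordinate subspace $V_S=\mathrm{span}\{e_i:i\in S\}$, $|S|=k$, and part (ii) makes the cost of $V_S$ decrease in the number $e(S)$ of edges of $G$ spanned by $S$, so that optimizing over coordinate subspaces becomes densest-$k$-subgraph on $G$.

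First I would compute the cost exactly for coordinate subspaces. For $V_S$, part (i) contributes $M(d-k)$, independent of $S$, and an edge $\{i,j\}$ contributes $1$, $2^{-p/2}$, or $0$ according as $|S\cap\{i,j\}|$ is $0$, $1$, or $2$. Writing $\mathrm{vol}(S)=\sum_{v\in S}\deg_G(v)$, $\alpha=1-2^{-p/2}$, and $\gamma_p:=2^{1-p/2}-1$ (a fixed positive constant since $p<2$ is fixed), a short calculation using $r$-regularity (so $\mathrm{vol}(S)=kr$ for all $|S|=k$) gives $c(Q,V_S)=M(d-k)+|E(G)|-\alpha kr-\gamma_p\,e(S)$. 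Hence $\min_S c(Q,V_S)$ equals an explicitly computable instance constant minus $\gamma_p\max_{|S|=k}e(S)$: if $G$ has a $k$-clique then $\min_S c(Q,V_S)=C_{\mathrm{yes}}$ for a computable $C_{\mathrm{yes}}=\poly(d)$, and otherwise $\min_S c(Q,V_S)\ge C_{\mathrm{yes}}+\gamma_p$.

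The main work is the \emph{robustness} step: $\min_V c(Q,V)\ge\min_S c(Q,V_S)-O(\poly(d)/M)$, so arbitrary subspaces beat coordinate ones by at most a $1/\poly(d)$ additive amount once $M$ is a large enough polynomial. Fix a $k$-subspace $V$, set $b_i=1-\|V_{i,*}\|_2^2\in[0,1]$ as in the proof of Lemma~\ref{lem:simplex}, so $\sum_i b_i=d-k$ and part (i) contributes $M\sum_i b_i^{p/2}$; let $D=\sum_i\min(b_i,1-b_i)$. The quantitative refinement I need is the elementary pointwise inequality $b^{p/2}-b\ge c_p\min(b,1-b)$ on $[0,1]$ with $c_p=c_p(p)>0$, which summed gives $\sum_i b_i^{p/2}\ge(d-k)+c_pD$. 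If $D\ge\tfrac12$ this already exceeds $M(d-k)$ by $\Omega(M)$, dominating the at most $|E(G)|$ that part (ii) can ever save. If $D<\tfrac12$, take $S$ to contain all $i$ with $b_i\le\tfrac12$ (padded to size $k$ by smallest remaining $b_i$); then $\sum_{i\notin S}\|V_{i,*}\|_2^2=\sum_{i\notin S}(1-b_i)\le D$, the matrix obtained from $V$ by zeroing rows outside $S$ has least singular value $\ge\sqrt{1-D}\ge1/\sqrt2$, and a Davis--Kahan-type bound yields $\mathrm{dist}(V,V_S)=O(\sqrt D)$. Since each point of part (ii) has unit norm, $V\mapsto c(Q',V)$ is $O(|E(G)|)$-Lipschitz in subspace distance, so $c(Q',V)\ge c(Q',V_S)-O(|E(G)|\sqrt D)$; combining with $M\sum_i b_i^{p/2}\ge M(d-k)+Mc_pD$ and minimizing over $D\ge0$ gives $c(Q,V)\ge c(Q,V_S)-O(|E(G)|^2/(c_pM))\ge\min_S c(Q,V_S)-O(\poly(d)/M)$.

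Finally, choose $M$ a large enough polynomial in $d$ so the slack above is below $\gamma_p/4$; then $\min_V c(Q,V)\le C_{\mathrm{yes}}$ if $G$ has a $k$-clique and $\ge C_{\mathrm{yes}}+\tfrac34\gamma_p$ otherwise. A polynomial-time algorithm returning $V$ with $c(Q,V)\le(1+\eps)\min_V c(Q,V)$ for $\eps<\gamma_p/(2C_{\mathrm{yes}})=1/\poly(d)$ would decide $k$-\textsc{Clique} by comparing the (poly-time computable) value $c(Q,V)$ against the threshold $C_{\mathrm{yes}}+\gamma_p/2$: in the yes case $c(Q,V)\le(1+\eps)C_{\mathrm{yes}}<C_{\mathrm{yes}}+\gamma_p/2$, in the no case $c(Q,V)\ge\min_V c(Q,V)>C_{\mathrm{yes}}+\gamma_p/2$; hence NP-hard. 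I expect the robustness step to be the main obstacle: unlike Lemma~\ref{lem:simplex}, for $p\in(1,2)$ the true optimum need not be exactly a coordinate subspace --- an infinitesimal rotation of $V_S$ toward an uncaptured direction shared by edge points lowers their (second-order flat) cost while paying only a $|\theta|^p$ simplex price, which wins as $\theta\to0$ --- which is precisely why the theorem can only claim $(1+1/\poly(d))$-inapproximability, and why the quantitative inequality $b^{p/2}-b\ge c_p\min(b,1-b)$ and the $\mathrm{dist}(V,V_S)=O(\sqrt D)$ perturbation bound must be carried out carefully enough to see the unavoidable slack is only $1/\poly(d)$.
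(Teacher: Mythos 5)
Your proposal is correct in its essentials, but it is a genuinely different reduction from the paper's. The paper also reduces from {\sf Clique} on $r$-regular graphs and also uses many copies of the simplex $E$ to pin the solution near a coordinate subspace, but its graph gadget is a $d\times d$ matrix $A$ of near-unit-vector rows ($A_{i,i}=1-1/B_1$, $A_{i,j}=c/\sqrt{B_1 r}$ on edges), so the yes/no gap among coordinate subspaces is only $\Omega(B_1^{-p/2}/r^2)=1/\poly(d)$, it must separately argue $k\le r$, and it needs two coupled scales $B_1\ll B_2$ together with Lemma~\ref{lem:perturb} and a two-case analysis ($v_{k+1}$ large or small) to control non-coordinate subspaces. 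You instead place one unit point $\tfrac1{\sqrt2}(e_i+e_j)$ per edge; with $r$-regularity the coordinate-subspace cost becomes an instance constant minus $\gamma_p e(S)$ with $\gamma_p=2^{1-p/2}-1$ a fixed constant, so the clique/no-clique gap at the coordinate level is $\Omega(1)$ rather than $1/\poly(d)$, and only one scale $M$ is needed. Your robustness step replaces Lemma~\ref{lem:perturb} by the pointwise bound $b^{p/2}-b\ge(2^{1-p/2}-1)\min(b,1-b)$ (valid by concavity of $b\mapsto b^{p/2}-b$ on each half of $[0,1]$), plus a projector-perturbation/Lipschitz argument, and the optimization over the defect $D$ correctly yields slack $O(|E|^2/(c_pM))=\poly(d)/M$; the final threshold argument then goes through with $\eps=\gamma_p/(2C_{\mathrm{yes}})=1/\poly(d)$, since $C_{\mathrm{yes}}=\Theta(Md)$. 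A few details deserve care when writing this up: when $D<\tfrac12$ the constraints $\sum_i b_i=d-k$ force \emph{exactly} $k$ indices with $b_i\le\tfrac12$ (so your ``padding'' clause is moot, and ``contain all such $i$'' is well defined); the bound $\mathrm{dist}(V,V_S)\le\sqrt D$ follows directly from $\|(I-P_S)V\|_F\le\sqrt D$ and equality of one-sided sines for equal-dimensional subspaces, so no Davis--Kahan machinery is really needed; the Lipschitz constant per unit point is $p$ (using $|x^p-y^p|\le p|x-y|$ on $[0,1]$, needing $p\ge1$); and the regularization gadget needs enough copies per vertex to host a $\delta$-regular bipartite graph and the observation that cliques of size $\ge3$ cannot mix copies (alternatively, you could simply cite, as the paper does, hardness of independent set on $3$-regular graphs and pass to complements). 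Net effect: your construction gives a cleaner, constant additive gap and a single-parameter analysis, at the price of carrying out the quantitative perturbation bound yourself; the paper's construction keeps all gadget points infinitesimally close to the unit vectors, which makes the coordinate-subspace calculus heavier but the non-coordinate case reducible to the crude Lemma~\ref{lem:perturb}.
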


\ifCONF
\begin{IEEEproof}
\else
\begin{proof}
\fi
Let $G$ be the input graph to the {\sf Clique} problem on $d$ vertices, in which the goal is to determine if $G$ contains a clique of size at least $k$. We assume that $G$ is a {\it regular} graph, and let $r$ be the degree of each vertex. Note that there is a value of $r$, as a function of $d$, 
for which the problem is still NP-hard. Indeed, Garey and Johnson \cite{gj79} 
show that it is NP-hard to find the size of the maximum independent set
even in $3$-regular graphs. As the maximum independent set is the largest clique in the 
complement graph $G$, which is $(d-3)$-regular, there is at least one value of $r$, as a function of $d$, 
for which the {\sf Clique} problem in $r$-regular graphs is NP-hard.

Let $B_1 = \poly(d)$ be sufficiently large and to be specified below. Let $c$ be such that 
\begin{eqnarray}\label{eqn:normalized}
(1-1/B_1)^2 + c^2/B_1 = 1.
\end{eqnarray}
Noting that $(1-1/B_1)^2 = 1 - 2/B_1 + O(1/B_1)^2$, we see that 
$c = \sqrt{2-1/B_1} = \sqrt{2} - O(1/B_1)$, so $c \in (1,2)$. 

We construct a $d \times d$ matrix $A$ as follows: for all $i \in [d]$, $A_{i,i} = 1-1/B_1$, while for $i \neq j$, we have $A_{i,j} = A_{j,i} = c/\sqrt{B_1 r}$ if $\{i,j\}$ is an edge of $G$, and $A_{i,j} = A_{j,i} = 0$ otherwise. Recall here that $G$ is an $r$-regular graph. 
By (\ref{eqn:normalized}), for each row $A_i$ of $A$, we have
$\|A_i\|_2^2 = \left (1- \frac{1}{B_1} \right)^2 + r \cdot \frac{c^2}{B_1 r} = 1.$
Let $B_2 = \poly(d)$ be sufficiently large and to be specified below. Our input set $Q$ to the problem of minimizing $c(Q,V)$ over $k$-dimensional spaces $V$ consists of $B_2$ copies of the $d$ points in $E$, together with the rows of $A$. Notice that all input points have norm $1$. 

We note that in our instance $n = \poly(d)$. If one is interested in achieving hardness for $n = O(d)$, 
one can set $d'= d^{\gamma}$ for a small constant $\gamma > 0$, so that $\poly(d') = n$, and 
use $d'$ in place of $d$ in what follows. The remaining $d-d'$ coordinates in each input 
point are set to $0$. 

Let $\mathcal{W}$ be the set of $\binom{n}{k}$ $k$-dimensional subspaces $V$ of $\mathbb{R}^d$ formed by the span of $k$ distinct standard unit vectors $e_i$, $i \in \{1, 2, \ldots, d\}$. We first consider the cost $c(Q, V)$ for $V \in \mathcal{W}$. 

We identify $V$ with the set $S$ of $k$ coordinates $i$ for which $e_i$ is in the span of $V$. Consider the $i$-th row of $A$. Then we have
$$d(A_i, V)^p = (\|A_i\|_2^2 - \|A_i^S\|_2^2)^{p/2} = (1 - \|A_i^S\|_2^2)^{p/2},$$
where $A_i^S$ denotes the vector which agrees with $A_i$ on coordinates in $S$, and is $0$ otherwise. Indeed,
this follows from the fact that the vectors in the span of $V$ have arbitrary values on the coordinates
in $S$, and have the value $0$ on coordinates outside of the set $S$. Therefore,
$d(A_i, V) = (\|A_i\|_2^2 - \|A_i^S\|_2^2)^{1/2}$. 

First, suppose $i \in S$. Then 
\begin{eqnarray}\label{eqn:inS}
\|A_i^S\|_2^2 = (1-1/B_1)^2 + e(i, S) c^2/(B_1 r),
\end{eqnarray}
where $e(i,S)$
denotes the number of edges in $G$ with one endpoint equal to $i$ and the other endpoint in $S$. So, in this case, 
\begin{eqnarray*}
d(A_i, V)^p & = & (1 - (1-1/B_1)^2 - e(i,S) c^2/(B_1 r))^{p/2}
= (2/B_1 - e(i,S) c^2/(B_1 r) - 1/B_1^2)^{p/2}\\
& = & (1/B_1^{p/2}) (2 - e(i,S) c^2/r - 1/B_1)^{p/2},
\end{eqnarray*}
where the $1/B_1$ term can be made arbitrarily small by making $B_1$ a sufficiently large value of $\poly(d)$. 

On the other hand, suppose $i \notin S$. Then 
\begin{eqnarray}\label{eqn:notinS}
\|A_i^S\|_2^2 & = & e(i,S) c^2/(B_1 r).
\end{eqnarray}
So in this case
\begin{eqnarray*}
d(A_i, V)^p & = & (1 - e(i,S) c^2/(B_1 r))^{p/2}
= 1- e(i,S) c^2 p / (2B_1 r) + O(e(i,S)^2 / (B_1^2 r^2))
= 1 - O(1/B_1),
\end{eqnarray*}
provided $B_1$ is a sufficiently large value of $\poly(d)$, and using that $e(i,S) \leq r$. 

We thus have
\begin{eqnarray}\label{eqn:tryAgain}
c(A, V) & = & (d-|S|) - O(d/B_1) + (1/B_1^{p/2}) \sum_{i \in S} (2 - e(i,S) c^2/r - 1/B_1)^{p/2}.
\end{eqnarray}
We note that we can assume $k \leq r$. Indeed, trivially $k \leq r+1$, while 
determining if there is a
clique of size $r$ in an $r$-regular graph can be solved in polynomial time
by checking, if for each vertex $i$, if $r-1$ of the $r$ neighbors of $i$, together
with vertex $i$, form a clique of size $r$. If a clique is found, then the maximum
clique size of the graph is at least $r$. On the other hand, if the maximum clique size
is at least $r$, and we choose $i$ to be a vertex in the maximum clique, it must be
that $r-1$ of its $r$ neighbors form a clique. Since $\binom{r}{r-1} = r$, the time
complexity is polynomial.  

Since $k \leq r$ and $|S| = k$, 
we have that for $i \in S$, $e(i,S) \leq k-1 \leq r-1$, and so
$2- e(i,S) c^2/r \geq 2/r - 1/B_1 \geq 1/r$, where the final inequality follows
for $B_1$ a large enough $\poly(d)$. Hence, for $i \in S$, 
\begin{eqnarray*}
(2- e(i,S) c^2/r - 1/B_1)^{p/2} & = & (2-e(i,S)c^2/r)^{p/2}(1-1/(B_1(2-e(i,S)c^2/r)))^{p/2}\\
& \geq & (2-e(i,S)c^2/r)^{p/2}(1-r/B_1)^{p/2}\\
& \geq & (2-e(i,S)c^2/r)^{p/2}(1-r/B_1)\\
& = & (2-e(i,S) 2/r)^{p/2} - O(r/B_1),
\end{eqnarray*}
where the first inequality uses that $2-e(i,S)c^2/r \geq 1/r$, and the second inequality
uses that $p \leq 2$ and $B_1 > r$. Note also
$(2- e(i,S) c^2/r - 1/B_1)^{p/2} \leq (2- e(i,S) c^2/r)^{p/2}.$
Plugging into (\ref{eqn:tryAgain}), we have the equality for some value in $O(d^2/B_1)$:
\begin{equation}\label{eqn:try2}
c(A,V)  =  d- |S| - O(d^2/B_1) + (2/B_1)^{p/2} \sum_{i \in S} (1-e(i,S)/r)^{p/2}.  
\end{equation}
If there is a clique of size $k$, then {\it there exists} a $V \in \mathcal{W}$ for which  
(\ref{eqn:try2}) is at most
\begin{equation}\label{eqn:cliqueThere}
c(A,V)  \leq  d- |S| + (2/B_1)^{p/2} |S| (1-(k-1)/r)^{p/2},
\end{equation}
since we can choose $S$ to be the set of coordinates corresponding to those in the clique,
and so each vertex $i \in S$ is incident to $k-1$ other vertices in $S$, so $e(i,S) = k-1$
for all $i \in S$. 

On the other hand, 
if there is no clique of size $k$, then (\ref{eqn:try2}) implies for any $V \in \mathcal{W}$, $c(A,V)$ is at least
\begin{align}\label{eqn:cliqueNotThere}
d- |S| + (2/B_1)^{p/2} (|S|-1) (1-(k-1)/r)^{p/2} + (2/B_1)^{p/2} (1-(k-2)/r)^{p/2} - O(d^2/B_1),
\end{align}
since for any choice of $S$ of $k$ coordinates, we cannot have $e(i,S) = k-1$ for all
$i \in S$, as otherwise the corresponding vertices would constitute a clique of size $k$.
It follows that for at least one $i \in S$, $e(i,S) \leq k-2$ (note also that $e(i,S)$
is at most $k-1$ for all $i \in S$). 

Note that $c(A,V)$ is an additive $(2/B_1)^{p/2} ((1-(k-2)/r)^{p/2} - (1-(k-1)/r)^{p/2}) - O(d^2/B_1)$ larger
in (\ref{eqn:cliqueNotThere}) than in (\ref{eqn:cliqueThere}). Note that using $p \in [1,2)$ 
and $0 \leq k-1 < r$,
\begin{align*}
(1-(k-2)/r)^{p/2} - (1-(k-1)/r)^{p/2} & =  (1-(k-1)/r)^{p/2} ((1 + 1/(r(1-(k-1)/r)))^{p/2} - 1)\\
& =  (1-(k-1)/r)^{p/2} ((1+1/(r-(k-1)))^{p/2} - 1)\\
& \geq  (1/r)^{p/2} ((1+1/r)^{p/2} - 1)
\geq  (1/r) ((1+1/r)^{1/2}-1)
=  \Omega(1/r^2),
\end{align*}
where in the last line 
we used that $1+1/r \geq (1 + 1/(2r) - 1/(8r^2))^2$ which can be verified by expanding
the square. 
Hence, $c(A,V)$ is an additive $\Omega((1/B_1)^{p/2}/r^2) - O(d^2/B_1)$ larger in (\ref{eqn:cliqueNotThere})
than in (\ref{eqn:cliqueThere}). For $B_1$ a large enough $\poly(d)$, and using that $p$ is a constant less than $2$,
this is $\Omega((1/B_1)^{p/2}/r^2)$. Note that in both cases (whether or not $G$ has a
clique of size at least $k$), the cost $c(E,V)$ is $d-k$ for $V \in \mathcal{W}$, as promised
by Lemma \ref{lem:simplex}. Thus, the $B_2$ copies of the $d$ points in $E$ preserve
the additive difference in the two cases. We will show below, in the ``Wrapup'', why
an additive difference of $\Omega((1/B_1)^{p/2}/r^2)$ suffices to complete the proof. 

It remains to handle the case that $V$ is not in $\mathcal{W}$. We have the following lemma.
\begin{lemma}\label{lem:perturb}
Suppose we order the values $\|V_{j,*}\|_2^2$ for $j \in [d]$, and let $v_i$
be the $i$-th largest value in this ordering. Then for
the simplex $E$, $c(E, V) \geq (d-k) + ((1-v_{k+1})^{p/2} + v_{k+1}^{p/2} - 1)$.  
\end{lemma}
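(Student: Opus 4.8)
The setup is almost identical to the proof of Lemma \ref{lem:simplex}: with $b_i = 1 - \|V_{i,*}\|_2^2$ we have $c(E,V) = \sum_{i=1}^d b_i^{p/2}$ subject to $\sum_i b_i = d-k$ and $b_i \in [0,1]$. The extra information now is that the multiset $\{\|V_{i,*}\|_2^2\}$ has $k$-th largest element $v_{k+1}$ in the stated ordering, equivalently the sorted $b$-values satisfy: the $(d-k)$ largest $b_i$ are each at least $1 - v_{k+1}$ and the smallest of the ``large'' ones equals $1-v_{k+1}$ (reindexing so that $b_i$ is increasing, $v_{k+1} = 1 - b_{k+1}$). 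The idea is that the optimum over this restricted feasible set is obtained by pushing mass to the two extremes exactly as in Lemma \ref{lem:simplex}, but now the constraint $v_{k+1}$ prevents a clean all-zero/all-one split.

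Here is the argument I would carry out. First apply the smoothing inequality (\ref{eqn:ineq}) with $t = p/2 \in (0,1)$ repeatedly to the coordinates, exactly as in Lemma \ref{lem:simplex}, but respecting the constraint that one coordinate is pinned at value $b_{k+1} = 1 - v_{k+1}$. Concretely: among the coordinates with index $> k+1$ (the larger $b_i$'s) we may push each up to $1$; among the coordinates with index $< k+1$ we may push each down to $0$; this only decreases $\sum b_i^{p/2}$, and it can be done while maintaining $\sum_i b_i = d-k$ because the total mass $d-k$ is large enough to fill $d-k-1$ coordinates to value $1$ and leaves exactly $1 - (1-v_{k+1})\cdot(\text{nothing})$... more carefully: once the $d-k-1$ top coordinates are at $1$ and the pinned coordinate is at $1-v_{k+1}$, the remaining mass to distribute among the bottom $k$ coordinates is $(d-k) - (d-k-1) - (1-v_{k+1}) = v_{k+1}$, and by (\ref{eqn:ineq}) the cheapest way to place mass $v_{k+1}$ into coordinates in $[0,1]$ is to put it all on one coordinate (value $v_{k+1}$) and zero elsewhere. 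Hence the minimum of $\sum_i b_i^{p/2}$ over the restricted set is at least $(d-k-1)\cdot 1^{p/2} + (1-v_{k+1})^{p/2} + v_{k+1}^{p/2} = (d-k) + \big((1-v_{k+1})^{p/2} + v_{k+1}^{p/2} - 1\big)$, which is the claimed bound.

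The one point requiring care is that we must check the rearrangement really is feasible and really can only help: we need $v_{k+1} \le 1$ so that $v_{k+1}^{p/2} \le 1$ (so the claimed lower bound is meaningful, and indeed $\le d-k$), and we need that pushing a ``small'' coordinate down and a ``large'' coordinate up respects both $b_i \in [0,1]$ and the ordering constraint that keeps the $(k+1)$-st sorted value at $1-v_{k+1}$; since we only move the extreme coordinates and never move the pinned one, the sorted profile's $(k+1)$-st entry is unaffected, and inequality (\ref{eqn:ineq}) gives the strict decrease at each step. The main (mild) obstacle is bookkeeping the mass budget to confirm that after saturating the top coordinates there is exactly $v_{k+1}$ left to distribute and that $v_{k+1} \le 1$ so it fits in a single coordinate; both follow from $\sum_i \|V_{i,*}\|_2^2 = \|V\|_F^2 = k$ together with $v_{k+1}$ being the $(k+1)$-st largest of $d$ values summing to $k$, which forces $v_{k+1} \le k/(k+1) < 1$. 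With feasibility in hand, the smoothing bound yields the lemma.
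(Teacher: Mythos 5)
Your proposal is correct and follows essentially the same route as the paper's proof: the same change of variables $b_i = 1-\|V_{i,*}\|_2^2$ with $\sum_i b_i = d-k$, repeated application of inequality (\ref{eqn:ineq}) to push mass to the extremes while keeping the coordinate with value $1-v_{k+1}$ fixed, ending at the configuration with $d-k-1$ ones, one coordinate at $1-v_{k+1}$, one at $v_{k+1}$, and the rest zero, whose cost is exactly the claimed bound. Your mass-budget bookkeeping and the observation $v_{k+1}\le k/(k+1)<1$ match what the paper does implicitly, so no gap remains.
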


\ifCONF
\begin{IEEEproof}
\else
\begin{proof}
\fi
We have $c(E, V) = \sum_{i \in [d]} (1- v_i)^{p/2}$. Making the change of variables $b_i = 1-v_i$,
we have $c(E,V) = \sum_{i \in [d]} b_i^{p/2}$, where $b_i \in [0,1]$ since $v_i \in [0,1]$ for all $i$. 
Also, $\sum_{i \in [d]} b_i = d-k$. If $v_{k+1} = 0$, then the result now follows, as these constraints
imply $v_1 = v_2 = \cdots = v_k = 1$. Otherwise, suppose $v_{k+1} > 0$. Note that under this ordering,
$1 \geq b_n \geq b_{n-1} \geq \cdots \geq b_1 \geq 0$. If there exists a $j > k+1$ for which $b_j < 1$,
then necesarily there is a $j' \leq k$ for which $b_{j'} > 0$, as otherwise we could not have
$\sum_{i=1}^d b_i = d-k$. Then applying (\ref{eqn:ineq}) to $b_{j'}$ and $b_j$ by adding $\delta > 0$ to $b_{j}$
and subtracting $\delta$ from $b_{j'}$, for a sufficiently small $\delta > 0$, $c(E, V)$ can only get
smaller. Repeating this process, we obtain $b_1 = b_2 = \cdots = b_{k-1} = 0$, $b_k = v_{k+1}$, 
$b_{k+1} = 1 - v_{k+1}$, and $b_{k+2} = b_{k+3} = \cdots = b_d = 1$. In this case, 
$c(e,V) = d-k + ((1-v_{k+1})^{p/2} + v_{k+1}^{p/2} - 1)$, and the lemma follows. 
\ifCONF
\end{IEEEproof}
\else
\end{proof}
\fi
\hspace{5mm} We use Lemma \ref{lem:perturb} to analyze two cases. Recall that there are $B_2$ copies of the simplex $E$ in our
input set $Q$, where $B_2 = \poly(d)$ is a sufficiently large polynomial. We later specify any dependence between $B_1$ and $B_2$; so far there are none. 
\\\\
{\bf Case 1:} When we order the $\|V_{i,*}\|_2^2$ values, $v_{k+1} > \frac{3d(k+1)}{B_2}$. The intuition in this case is that the cost on the $B_2$ copies of the $d$ points in $E$ will already be too large. By Lemma \ref{lem:perturb},
we have 
\begin{eqnarray}\label{eqn:B2}
c(Q, V) \geq B_2 \cdot c(E,V) 
= B_2(d-k) + B_2 ((1-v_{k+1})^{p/2} + v_{k+1}^{p/2} - 1).
\end{eqnarray}
We analyze $x = (1-v_{k+1})^{p/2} + v_{k+1}^{p/2}$. Note that $v_{k+1} \leq 1-1/(k+1)$. Indeed, otherwise
$\|V\|_F^2 > (k+1)(1-1/(k+1)) = k$, a contradiction.  
By (\ref{eqn:ineq}), $x$ is at least
the minimum of 
$\left (\frac{1}{k+1} \right )^{p/2} + \left (1-\frac{1}{k+1} \right )^{p/2}$,
and $\left (1 - \frac{3d(k+1)}{B_2} \right )^{p/2} + \left (\frac{3d(k+1)}{B_2} \right )^{p/2}$. 
For $B_2$ a sufficiently large $\poly(d)$, again by (\ref{eqn:ineq}), $x$ is at least 
$\left (1 - \frac{3d(k+1)}{B_2} \right )^{p/2} + \left (\frac{3d(k+1)}{B_2} \right )^{p/2}$,
which in turn using that $p \leq 2$, is at least
$\left (1 - \frac{3d(k+1)}{B_2} \right ) + \left (\frac{3d(k+1)}{B_2} \right )^{p/2}$. 
Therefore, $x - 1$ is at least
$\left (\frac{3d(k+1)}{B_2} \right )^{p/2} - \frac{3d(k+1)}{B_2}$. Since $p$ is a constant
strictly less than $2$, for sufficiently large $B_2 = \poly(d)$, this is at least
$\frac{1}{2} \cdot \left (\frac{3d(k+1)}{B_2} \right )^{p/2}$. 


Plugging into (\ref{eqn:B2}), 
\begin{equation*}
c(Q,V)
	    \ge B_2(d-k) + B_2 \cdot \frac{1}{2} \cdot \left(\frac{3d(k+1)}{B_2} \right )^{p/2} 
	 = B_2(d-k) + B_2^{1-p/2} \cdot \frac{(3d(k+1))^{p/2}}{2} \geq B_2(d-k) + 2d,
\end{equation*}
where the second inequality uses that $B_2$ is a sufficiently large $\poly(d)$ and $p$ is a constant
strictly less than $2$. Inspecting (\ref{eqn:try2}), 
for $B_1 = \poly(d)$ sufficiently large (and chosen independently of $B_2$), $c(A,V) \leq d$, and therefore
for any $V' \in \mathcal{W}$, we have $c(Q,V') \leq B_2(d-k) + d$. Since $B_2 = \poly(d)$, 
the above $V$ cannot be a $(1+1/\poly(d))$-approximation to $C(Q,V^*)$. Note that there is no
dependence between $B_1$ and $B_2$ at this point, we just need each to be a sufficiently large
$\poly(d)$. 
\\\\
{\bf Case 2:} When we order the $\|V_{i,*}\|_2^2$ values, $v_{k+1} \leq \frac{3d(k+1)}{B_2}$. The intuition in this case is that the cost will be roughly the same as the case $V \in \mathcal{W}$. The condition implies that
$v_{k+2}, \ldots, v_d \leq \frac{3d(k+1)}{B_2}$. We let $S \subseteq [d]$ be the set of indices 
$i \in [d]$ corresponding to values $v_1, \ldots, v_k$. 

For an $i \in [d]$, 
we write $A_i = A_i^S + A_i^{[d] \setminus S}$, where $A_i^S$ is $0$ outside of the columns in $S$, and 
$A_i^{[d]\setminus S}$ is $0$ outside of the columns of $[d]\setminus S$. Similarly we write 
$V = V^S + V^{[d] \setminus S}$, where $V^S$ is $0$ outside of the rows in $S$, and 
$V^{[d] \setminus S}$ is $0$ outside of the rows of $[d] \setminus S$. Then
$A_iV = A_i^S V^S + A_i^{[d] \setminus S} V^{[d] \setminus S}$, and so
\begin{eqnarray*}
\|A_i V\|_2 & \leq & \|A_i^S V^S\|_2 + \|A_i^{[d] \setminus S} V^{[d] \setminus S}\|_2
\leq \|A_i^S\|_2 + \|A_i^{[d] \setminus S}\|_2 \|V^{[d] \setminus S}\|_2\\
& \leq & \|A_i^S\|_2 + \|V^{[d] \setminus S}\|_F
 \leq \|A_i^S\|_2 + \left (\frac{3d(k+1)(d-k)}{B_2} \right )^{1/2},
\end{eqnarray*}
where the first inequality is the triangle inequality, 
the second inequality uses sub-multiplicativity of the operator norm,
the third inequality uses that $\|A_i\|_2 = 1$ and that the Frobenius norm upper bounds the operator norm,
and the final inequality uses that $\|V^{[d] \setminus S}\|_F = \left (\sum_{i=k+1}^d v_{k+1} \right )^{1/2}$. 
This implies 
$\|A_i V\|_2^2 \leq \|A_i^S\|_2^2 + 3\left (\frac{3d(k+1)(d-k)}{B_2} \right )^{1/2},$
using that $\|A_i^S\|_2 \leq 1$ and $B_2$ is a sufficiently large $\poly(d)$. Hence,
\begin{eqnarray}\label{eqn:continue}
d(A_i, V)^p & = & (1 - \|A_i V\|_2^2)^{p/2} 
\geq \left (1 - \|A_i^S\|_2^2 - 3 \left (\frac{3d(k+1)(d-k)}{B_2} \right )^{1/2} \right )^{p/2}.
\end{eqnarray}
If $i \in S$, then $\|A_i^S\|_2^2 = (1-1/B_1)^2 + e(i,S)c^2/(B_1r)$ by (\ref{eqn:inS}). In this
case, plugging into (\ref{eqn:continue}) and taking out a $1/B_1^{p/2}$ factor,  
$$d(A_i, V)^p \geq (1/B_1^{p/2}) \left (2 - e(i,S)c^2/r - 1/B_1-  3 \left (\frac{3d(k+1)(d-k)B_1^2}{B_2} \right )^{1/2} \right )^{p/2}.$$
Now we introduce a dependence between $B_2$ and $B_1$: by 
making $B_2$ a sufficiently large $\poly(d)$ factor larger than $B_1$, we can absorb
the $3 \left (\frac{3d(k+1)(d-k)B_1^2}{B_2} \right )^{1/2}$ term into the $1/B_1$ term, obtaining:
$$d(A_i, V)^p \geq (1/B_1^{p/2})(2 - e(i,S)c^2/r - O(1/B_1))^{p/2}.$$
On the other hand, if $i \notin S$, then $\|A_i^S\|_2^2 = e(i,S) c^2/(B_1r)$ by (\ref{eqn:notinS}). In this case, 
by (\ref{eqn:continue}), 
\begin{eqnarray*}
d(A_i, V)^p & \geq & \left (1-e(i,S)c^2/(B_1 r) - 3 \left (\frac{3d(k+1)(d-k)}{B_2} \right )^{1/2} \right )^{p/2} \geq 1-O(1/B_1),
\end{eqnarray*}
where the second inequality follows by again making $B_2$ a sufficiently large $\poly(d)$ factor larger than $B_1$. We thus have as
a lower bound the expression in (\ref{eqn:tryAgain}), that is, 
$c(A,V) \geq d- |S| -O(d/B_1) + (1/B_1^{p/2}) \sum_{i \in S} (2 - e(i,S) c^2/r - O(1/B_1))^{p/2}.$
We obtain the same conclusion as in (\ref{eqn:cliqueNotThere}), 
as all that has changed is the absolute constant in the $O(d/B_1)$ notation, and the $1/B_1$ term in each summand is now $O(1/B_1)$, which does not affect the bound in 
(\ref{eqn:cliqueNotThere}), other than changing the constant in the additive $O(d^2/B_1)$ term. Moreover, $c(E,V) \geq d-k$ by Lemma \ref{lem:simplex}, and so even if $V \notin \mathcal{W}$, if there is no clique of size at least $k$, the cost $c(Q,V)$ is an additive
$\Omega((1/B_1)^{p/2}/r^2)$ factor larger than the optimal $V$ (which is possibly in $\mathcal{W}$) when there is a clique of size at least $k$.  

\hspace{5mm} {\bf Wrapup:} It follows that $c(Q, V^*)$ is an additive $\Omega((1/B_1)^{p/2}/r^2)$ factor larger if 
there is no clique of size $k$, versus if there is a clique of size $k$. 
Since $c(Q,V) \leq B_2(d-k) + c(A,V) \leq B_2(d-k) + d$, using that $\|A_i\|_2^p = 1$ for all $i$, it follows that any 
algorithm which outputs a $k$-dimensional subspace $V$ for which $c(Q, V) \leq (1+1/\poly(d))c(Q, V^*)$, for a large enough
$\poly(d)$, can be used to solve the {\sf Clique} problem on $r$-regular graphs on $d$ vertices, in which the goal is to decide
if there is a clique of size at least $k$. 
The theorem now follows from the NP-hardness of the {\sf Clique} problem. 
%
\ifCONF
\end{IEEEproof}
\else
\end{proof}
\fi

\ifCONF\else

\section{Regression with $M$-estimators}\label{sec regress}

We present our proof of Theorem~\ref{thm regress}.


\begin{proof}
Let $\Delta^*$ denote the cost of the optimal solution.

The procedure is recursive, for a constant number of levels.
For $A$ at a given level, let $\hat A\equiv [A\ b]$.
At a general level of recursion, a weighted version of the problem is to be solved.
Compute leverage scores of $\hat A$ as in Lemma~\ref{lem M lev}, for each $U^j$
applying
Theorem~\ref{thm well cond} for $p=2$ and $m=d$,
obtaining a change-of-basis matrix $R^j$ in $O(\nnz(U^j)+\poly(d))$ time,
and then applying Theorem~\ref{thm G est}, to estimate the Euclidean row norms
of the $U^j$. From that theorem, and applying Lemma~\ref{lem M kinda embed}
with $z=1$ and $\rho=10\Delta^*$,
we have that for sample size
\[
O(n^{1/2+\kappa})\poly(d)\log(1/\delta)/\eps^2
\]
for fixed small $\kappa>0$, every $x\in\R^{d+1}$ has
either $\norm{S\hat A x}_v\ge 10\Delta^*$,
or $\norm{S\hat A x}_v = \norm{\hat A x}_v(1\pm\eps) \pm\eps\Delta^*$.
This implies that an $\eps$-approximate solution to $\min_{x\in\R^d}\norm{S(Ax-b)}_v$
is an $O(\eps)$-approximate solution to the original problem.
The inputs to the next level of recursion are $SA$, $Sb$, and the weights $w$
determining $\norm{S\cdot}_v$.

After a constant number of recursive steps, with constant blowup in error
and failure probability and in the weights, the resulting matrix $A'$ has $m=n^\beta\poly(d/\eps)$ rows,
for some $\beta>0$.  Pick $\beta$ such that the ellipsoid method can
be applied, with a running time $O(m^C)$ for a constant $C<1/2\beta$,
taking time $(n^\beta\poly(d/\eps))^C <\sqrt{n}\poly(d/\eps) =O(n)+\poly(d/\eps)$.
The theorem follows.
\end{proof}
\fi 

\ifSUBM
\newpage
\fi

\section*{Acknowledgments}

We acknowledge the support of the XDATA program
of the Defense Advanced Research Projects Agency (DARPA), administered through Air Force Research Laboratory contract FA8750-12-C-0323.
We are very grateful to Danny Feldman, Nimrod Megiddo, Kasturi Varadarajan, and Nisheeth Vishnoi for many helpful discussions.

\ifCONF
\bibliographystyle{IEEEtran}
\bibliography{main}
\else
\bibliographystyle{plain}
\bibliography{main}
\newpage
\fi

\ifCONF\else
\appendix
\section{$L_1-L_2$ Subadditivity}\label{app:l1l2loss}
\begin{theorem}
For $L_1-L_2$ loss function $f(x) = 2(\sqrt{1+x^2/2}-1)$, its square-root
$g(x) = \sqrt{2} (\sqrt{1+x^2/2}-1)^{1/2}$ is a subadditive function. 
\end{theorem}
\begin{proof}
Since $g(0) = 0$, 
by Remark 2.19 of 
\cite{M11}
, it suffices to show $g(x)/x$ is a decreasing function
for $x \in \mathbb{R}^{\geq 0}$. The function $g(x)/x$ is decreasing
if and only if the function $h(x) = (\sqrt{1+x^2/2}-1)^{1/2}/x$ is decreasing.

Plugging $h(x)$ into an online derivative calculator, we have
$$h'(x) =  \frac{1}{4 \sqrt{1+x^2/2} \sqrt{\sqrt{1+x^2/2}-1}}
- \frac{\sqrt{\sqrt{x^2/2+1}-1}}{x^2},$$
and it suffices to show $h'(x) < 0$, which is equivalent to showing
$$x^2 \leq 4 \sqrt{x^2/2+1} (\sqrt{x^2/2+1}-1),$$
or 
$$x^2 \leq 2x^2+4 - 4\sqrt{x^2/2+1},$$
or 
$$16(x^2/2+1) \leq x^4 + 8x^2 + 16,$$
which is equivalent to $0 \leq x^4$, which holds. 
\end{proof}

\fi

\end{document}